\DeclarePairedDelimiter\bra{\langle}{\rvert}
\DeclarePairedDelimiter\ket{\lvert}{\rangle}
\DeclarePairedDelimiterX\braket[2]{\langle}{\rangle}{#1 \delimsize\vert #2}
\DeclarePairedDelimiterX\bbraket[2]{\langle\!\langle}{\rangle\!\rangle}{#1 \delimsize| #2}
\DeclarePairedDelimiterX\cbraket[2]{(\!(}{)\!)}{#1 \delimsize\| #2}
\DeclarePairedDelimiterX\ketbra[2]{\lvert}{\rvert}{#1 \delimsize\rangle\!\langle #2}
\DeclarePairedDelimiterX\kketbra[2]{|}{|}{#1 \delimsize\rangle\!\rangle\!\langle\!\langle #2}
\DeclarePairedDelimiterX\cketbra[2]{\|}{\|}{#1 \delimsize)\!)\!(\!( #2}
\DeclarePairedDelimiterX\inner[2]{\langle}{\rangle}{#1,#2}
\DeclarePairedDelimiter\norm{\lVert}{\rVert}
\def\tr{{\rm tr}\,}
\newtheorem{theorem}{Theorem}
\newtheorem{lemma}{Lemma}
\newtheorem{corollary}{Corollary}
\newtheorem{example}{Example}
\newtheorem{conjecture}{Conjecture}
\begin{document}
\title{The Varieties of Minimal Tomographically Complete Measurements} \author[$\dag\star$]{John B.\ DeBrota}
\author[$\dag\star$]{Christopher A.\ Fuchs} \author[$\dag$]{Blake
  C.\ Stacey} \affil[$\dag$]{\small
  \href{http://www.physics.umb.edu/Research/QBism}{QBism Group},
  Physics Department, University of Massachusetts Boston, \par 100
  Morrissey Boulevard, Boston MA 02125, USA}
\affil[$\star$]{\href{http://stias.ac.za/events/workshop-on-participatory-realism/}{Stellenbosch
    Institute for Advanced Study} (STIAS), Wallenberg Research Center
  at Stellenbosch University, Marais Street, Stellenbosch 7600, South
  Africa} \date{\today}

\maketitle
\begin{abstract}
Minimal Informationally Complete quantum measurements, or MICs, 
illuminate the structure of quantum theory and how it departs from 
the classical. Central to this capacity is their role as 
tomographically complete measurements with the fewest possible 
number of outcomes for a given finite dimension. Despite their 
advantages, little is known about them. We establish general 
properties of MICs, explore constructions of several classes of 
them, and make some developments to the theory of MIC Gram matrices.
These Gram matrices turn out to be a rich subject of inquiry, 
relating linear algebra, number theory and probability. Among our 
results are some equivalent conditions for unbiased MICs,  a 
characterization of rank-1 MICs through the Hadamard product, several 
ways in which immediate properties of MICs capture the abandonment 
of classical phase space intuitions, and a numerical study of MIC 
Gram matrix spectra. We also present, to our knowledge, the first 
example of an unbiased rank-1 MIC which is not group covariant. This 
work provides further context to the discovery that the symmetric 
informationally complete quantum measurements (SICs) are in many 
ways optimal among MICs. In a deep sense, the ideal measurements of quantum physics are not orthogonal bases.
\end{abstract}
\section{Introduction}
\label{sec:intro}
\normalsize
A significant part of science is the pursuit of measurements that are as informative as possible.  Attempts to provide an elementary explanation of ``the scientific method'' sometimes convey the notion that an ideal measurement is one which is exactly reproducible, always yielding the same answer when applied in succession.  But this notion has fairly obvious problems, for example, when the system being measured is dynamical.  When the experiment's sought outcome is the position of Mars at
midnight, the numbers will not be the same from one night to the next, and yet Kepler could run a scientific revolution on that data.  A more refined standard would be that an ideal measurement is one that provides enough information to project the complete dynamical trajectory of the measured system through phase space. Quantum physics frustrates this ambition by denying the phase space: Quantum uncertainties are not uncertainties about the values of properties that pre-exist the act of
measurement. Yet the ideal of a sufficiently informative measurement, the expectations for which fully fix the expectations for any other, can still be translated from classical thought to quantum, and doing so illuminates the nature of quantum theory itself. 

Let $\mathcal{H}_d$ be a $d$-dimensional complex Hilbert
space, and let $\{E_i\}$ be a set of positive semidefinite operators
on that space which sum to the identity:
\begin{equation}
\sum_{i=1}^N E_i = I.
\end{equation}
The set $\{E_i\}$ is a \emph{positive-operator-valued measure} (POVM),
which is the mathematical representation of a measurement process in
quantum theory. Each element in the set --- called an \emph{effect}
--- stands for a possible outcome of the measurement~\cite[\S
  2.2.6]{Nielsen:2010}. A POVM is said to be \emph{informationally
  complete} (IC) if the operators $\{E_i\}$ span $\mathcal{L}(\mathcal{H}_d)$, the space of
Hermitian operators on $\mathcal{H}_d$, and an IC POVM is said to be
\emph{minimal} if it contains exactly $d^2$ elements.  For brevity, we
can call a minimal IC POVM a MIC.

A matrix which captures many important properties of a MIC is its Gram matrix, that is, the matrix $G$ whose entries are given by
\begin{equation}
[G]_{ij} := \tr E_i E_j\;.
\end{equation}
Of particular note among MICs are those which enjoy the symmetry
property
\begin{equation}
[G]_{ij} = [G_{\rm SIC}]_{ij} := \frac{1}{d^2} \frac{d\delta_{ij} + 1}{d+1}\;.
\end{equation}
These are known as \emph{symmetric} informationally complete POVMs, or
SICs for short~\cite{Zauner:1999, Renes:2004, Scott:2010a,
  Fuchs:2017a}. In addition to their purely mathematical properties,
SICs are of central interest to the technical side of QBism, a
research program in the foundations of quantum
mechanics~\cite{Fuchs:2014b, Fuchs:2013, Healey:2016,
  Fuchs:2016a}. Investigations motivated by foundational concerns led
to the discovery that SICs are in many ways optimal among
MICs~\cite{Appleby:2014b, Appleby:2015, DeBrota:2018}. In this paper,
we elaborate upon some of those results and explore the conceptual
context of MICs more broadly.

MICs provide a new way of understanding the Born Rule, a key step in
how one uses quantum physics to calculate probabilities.  The common
way of presenting the Born Rule suggests that it fixes probabilities
in terms of more fundamental quantities, namely quantum states and
measurement operators. MICs, however, suggest a change of
viewpoint. From this new perspective, the Born Rule should be thought
of as a \emph{consistency condition between the probabilities assigned
  in diverse scenarios} --- for instance, probabilities assigned to
the outcomes of complementary experiments. The bare axioms of
probability theory do not themselves impose relations between
probabilities given different conditionals: In the abstract, nothing
ties together $P(E|C_1)$ and $P(E|C_2)$. Classical intuition suggests
one way to fit together probability assignments for different
experiments, and quantum physics implies another. The discrepancy
between these standards encapsulates how quantum theory departs from
classical expectations~\cite{Fuchs:2017b, Stacey:2018b}. MICs provide
the key to addressing this discrepancy; any MIC may play the role of a
reference measurement through which the quantum consistency condition
may be understood. To understand MICs is to understand how quantum
probability is like, and differs from, classical.

In the next section, we introduce the fundamentals of quantum
    information theory and the necessary concepts from linear algebra to prove a few basic results about MICs and comment on their conceptual meaning. Among the results included are a characterization of unbiased MICs, a condition
in terms of matrix rank for when a set of vectors in~$\mathbb{C}^d$
can be fashioned into a MIC, and an explicit example of an unbiased MIC which is not group covariant. In Section~\ref{sec:constructions},
we show how to construct several classes of MICs explicitly and note some properties of their Gram matrices. In Section~\ref{sec:optimal}, we explore several ways in which SICs are optimal among MICs for the project of differentiating the quantum from the classical, a topic complementing one of our recent papers~\cite{DeBrota:2018}. To conclude, in Section~\ref{sec:numerics}, we conduct an initial numerical study of the Gram matrix eigenvalue spectra of randomly-chosen MICs of four
different types. The empirical
eigenvalue distributions we find have intriguing features, not all of
which have been explained yet. 

\section{Basic Properties of MICs}
\label{sec:basics}
We begin by briefly establishing the necessary notions from quantum
information theory on which this paper is grounded. In quantum
physics, each physical system is associated with a complex Hilbert
space. Often, in quantum information theory, the Hilbert space of
interest is taken to be finite-dimensional. We will denote the
dimension throughout by $d$. A \emph{quantum state} is a positive
semidefinite operator of unit trace. The extreme points in the space of quantum states are the rank-1
projection operators:
\begin{equation}
  \rho = \ketbra{\psi}{\psi}.
\end{equation}
These are idempotent operators; that is, they all satisfy $\rho^2 =
\rho$. If an experimenter ascribes the
quantum state $\rho$ to a system, then she finds her probability for
the $i^{\rm th}$ outcome of the measurement modeled by the POVM
$\{E_i\}$ via the Hilbert--Schmidt inner product:
\begin{equation}
  p(E_i) = \tr\rho E_i.
\end{equation}
This formula is a standard presentation of the Born Rule. The
condition that the $\{E_i\}$ sum to the identity ensures that the
resulting probabilities are properly normalized.

If the operators $\{E_i\}$ span the space of Hermitian
operators, then the operator $\rho$ can be reconstructed from its
inner products with them. In other words, the state $\rho$ can be
calculated from the probabilities $\{p(E_i)\}$, meaning that the
measurement is ``informationally complete'' and the state $\rho$ can,
in principle, be dispensed with. Any MIC can thus be considered a
``Bureau of Standards'' measurement, that is, a reference measurement
in terms of which all states and processes can be
understood~\cite{Fuchs:2002}. Writing a quantum state $\rho$ is often
thought of as specifying the ``preparation'' of a system, though this
terminology is overly restrictive, and the theory applies just as well
to physical systems that were not processed on a laboratory
workbench~\cite{Fuchs:2011c}.

Given any POVM $\{E_i\}$, we can always write its elements as
unit-trace positive semidefinite operators with appropriate scaling
factors we call \textit{weights}:
\begin{equation}
E_i := e_i \rho_i, \hbox{ where } e_i = \tr E_i.
\end{equation}
If the operators $\rho_i$ are all rank-1 projectors, we will refer to
the set $\{E_i\}$ as a \emph{rank-1 POVM}. We will call a POVM
\emph{unbiased} when the weights $e_i$ are all equal. Such operator sets
represent quantum measurements that have no intrinsic bias: Under the Born Rule they map
the ``garbage state'' $(1/d)I$ to a flat probability distribution. For an unbiased MIC, the
condition that the elements sum to the identity then fixes $e_i =
1/d$. 

A \textit{column (row) stochastic matrix} is a real matrix with nonnegative entries whose columns (rows) sum to $1$. If a matrix is both column and row stochastic we say it is \textit{doubly stochastic}. The following theorem allows us to identify an unbiased MIC from a glance at its Gram matrix or Gram matrix spectrum.

\begin{theorem}\label{unbiased}
    Let $\{E_i\}$ be a MIC and $\lambda_{\rm max}(G)$ be the maximal eigenvalue of its Gram matrix $G$. The following are equivalent:
    \begin{enumerate}
        \item $\{E_i\}$ is unbiased.
        \item $dG$ is doubly stochastic.
        \item $\lambda_{\rm max}(G)=1/d$.
    \end{enumerate}
\end{theorem}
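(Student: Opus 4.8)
The plan rests on one two-line identity. Since $\sum_j E_j = I$, we get $\sum_j [G]_{ij} = \tr E_i \sum_j E_j = \tr E_i = e_i$, so the $i$th row sum of $G$ — and, since $G$ is symmetric, the $i$th column sum — is exactly the weight $e_i$; equivalently $G\mathbf 1 = \mathbf e$, where $\mathbf 1$ is the all-ones vector and $\mathbf e = (e_1,\dots,e_{d^2})^{\mathsf T}$. Since $\tr\sum_i E_i = \tr I = d$, the weights sum to $d$, so $\{E_i\}$ is unbiased exactly when $\mathbf e = \tfrac1d\mathbf 1$, i.e. when $G\mathbf 1 = \tfrac1d\mathbf 1$. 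I will also use two standing facts: the entries $[G]_{ij}=\tr E_i E_j$ are nonnegative (a product of positive semidefinite operators has nonnegative trace), and $G$, being the Gram matrix of a basis of $\mathcal L(\mathcal H_d)$ with respect to the Hilbert--Schmidt inner product, is symmetric positive definite, so its eigenvalues are positive reals and $\lambda_{\rm max}(G)$ coincides with its spectral radius.

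With these in hand I would prove the cycle (1) $\Rightarrow$ (2) $\Rightarrow$ (3) $\Rightarrow$ (1). For (1) $\Rightarrow$ (2): if $\{E_i\}$ is unbiased then every $e_i = 1/d$, so every row and column of $dG$ sums to $1$; together with nonnegativity of the entries this says $dG$ is doubly stochastic. For (2) $\Rightarrow$ (3): a row-stochastic matrix has $\infty\to\infty$ operator norm $1$ and hence spectral radius at most $1$, while $dG\mathbf 1 = \mathbf 1$ exhibits $1$ as an eigenvalue; since $dG$ is positive definite its largest eigenvalue equals its spectral radius, so $\lambda_{\rm max}(dG)=1$ and $\lambda_{\rm max}(G)=1/d$.

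The only direction needing an idea is (3) $\Rightarrow$ (1). Here I would evaluate the Rayleigh quotient of $G$ at $\mathbf 1$: using $G\mathbf 1 = \mathbf e$, we get $\mathbf 1^{\mathsf T} G \mathbf 1 = \mathbf 1^{\mathsf T}\mathbf e = \sum_i e_i = d$, while $\mathbf 1^{\mathsf T}\mathbf 1 = d^2$, so the quotient is $1/d$ for \emph{every} MIC (in particular $\lambda_{\rm max}(G)\ge 1/d$ always, so condition (3) is the assertion that this bound is tight). Now if $\lambda_{\rm max}(G)=1/d$, then the Rayleigh quotient attains the maximal eigenvalue at $\mathbf 1$; expanding $\mathbf 1$ in an orthonormal eigenbasis of $G$ shows it has no component outside the $\lambda_{\rm max}$-eigenspace, hence $G\mathbf 1 = \tfrac1d\mathbf 1$. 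By the opening identity this is $\mathbf e = \tfrac1d\mathbf 1$, i.e. $\{E_i\}$ is unbiased, closing the cycle.

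I expect the difficulty to be bookkeeping rather than depth. The one spot that genuinely needs care is the appeal to positive definiteness of $G$ in (2) $\Rightarrow$ (3): a merely doubly stochastic matrix can have spectral radius $1$ realized by an eigenvalue $-1$, so ``spectral radius one'' by itself would not pin down $\lambda_{\rm max}$ — it is positive definiteness of $G$ that forces the top eigenvalue to be the spectral radius. The other point to state precisely is the equality case of the Rayleigh bound when the top eigenvalue may be degenerate; beyond that, everything reduces to the row-sum identity and standard facts about symmetric matrices.
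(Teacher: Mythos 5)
Your proposal is correct and follows essentially the same route as the paper's proof: the row-sum identity $G\mathbf{1}=\mathbf{e}$, the Rayleigh-quotient bound $\lambda_{\rm max}(G)\geq \mathbf{1}^{\mathsf T}G\mathbf{1}/\mathbf{1}^{\mathsf T}\mathbf{1}=1/d$ with its equality case, and an eigenvalue-localization argument for the doubly stochastic direction. The only cosmetic difference is that you bound the spectral radius by the $\infty\to\infty$ operator norm where the paper invokes the Gershgorin disc theorem; these are interchangeable here, and your explicit handling of positive definiteness and the degenerate equality case is if anything slightly more careful than the paper's.
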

\begin{proof}
  
    The equivalence of the first two conditions is readily shown. We show (2)$\iff$(3). Let $\ket{v}:=\frac{1}{d}(1,\ldots,1)^{\rm T}$ be the normalized $d^2$ element uniform vector of $1$s. If $dG$ is doubly stochastic, $\ket{v}$ is an eigenvector of $dG$ with eigenvalue $1$, and the Gershgorin disc theorem \cite{Leinster:2016} ensures $\lambda_{\rm max}(G)=1/d$. For any MIC,
    \begin{equation}\label{minmaxeval}
        \lambda_{\rm max}(G)\geq\bra{v}G\ket{v}=\frac{1}{d}\;,
    \end{equation}
    with equality iff $\ket{v}$ is an eigenvector of $G$ with eigenvalue $1/d$. Since $G\ket{v}=(e_1,\ldots,e_{d^2})^{\rm T}$, $\ket{v}$ is an eigenvector of $G$ iff $e_i=1/d$ for all $i$.
\end{proof}

Given a basis for
an inner product space, the \textit{dual basis} is defined by the condition that the inner
products of a vector with the elements of the dual basis provide the
coefficients in the expansion of that vector in terms of the original
basis. In our case, let $\{\widetilde{E}_i\}$ denote the basis dual to $\{E_i\}$ so that, for any vector $A\in\mathcal{L}(\mathcal{H}_d)$,
\begin{equation}\label{dualdef}
    A=\sum_j(\tr A\widetilde{E}_j)E_j\;.
\end{equation}
One consequence of this definition is that if we expand the
original basis in terms of itself,
\begin{equation}
E_i = \sum_j (\tr E_i \widetilde{E}_j) E_j\;,
\end{equation}
linear independence of the $\{E_i\}$ implies that
\begin{equation}
    \tr E_i \widetilde{E}_j = \delta_{ij}\;,
\end{equation}
from which one may easily see that the original basis is the dual of the dual basis,
\begin{equation}
    A=\sum_j(\tr A E_j)\widetilde{E}_j\;.
\end{equation}
In the familiar case when the original basis is orthonormal,
the dual basis coincides with it: When we write a vector $\mathbf{v}$ as
an expansion over the unit vectors $(\mathbf{\hat{x}}, \mathbf{\hat{y}}, \mathbf{\hat{z}})$, the
coefficient of $\mathbf{\hat{x}}$ is simply the inner product of $\mathbf{\hat{x}}$ with
$\mathbf{v}$. 

A MIC is a positive semidefinite operator basis. For positive semidefinite operators $A$ and $B$, $\tr AB=0$ iff $AB=0$. Recall that a Hermitian matrix which is neither positive semidefinite nor negative semidefinite is known as an \textit{indefinite} matrix. 
\begin{theorem}
    The dual basis of a MIC is composed entirely of indefinite matrices.
\end{theorem}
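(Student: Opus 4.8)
The plan is to fix an arbitrary index $j$ and rule out both $\widetilde E_j \succeq 0$ and $\widetilde E_j \preceq 0$. The negative-semidefinite case is a one-liner: summing the duality relation $\tr E_i \widetilde E_j = \delta_{ij}$ over all $i$ and using $\sum_i E_i = I$ gives $\tr \widetilde E_j = \sum_i \tr E_i \widetilde E_j = 1 > 0$, which is impossible if $\widetilde E_j \preceq 0$. The same computation also shows $\widetilde E_j \neq 0$, a fact we will need below.

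For the positive-semidefinite case, suppose toward a contradiction that $\widetilde E_j \succeq 0$. For each $i \neq j$ we have $\tr E_i \widetilde E_j = 0$ with both factors positive semidefinite, so by the fact recalled just before the statement, $E_i \widetilde E_j = 0$. Writing $W := \operatorname{ran}\widetilde E_j$, this says $W \subseteq \ker E_i$, and since $E_i$ is Hermitian, $\operatorname{ran} E_i = (\ker E_i)^\perp \subseteq W^\perp$. Hence all $d^2 - 1$ operators $\{E_i : i \neq j\}$ lie in the real vector space of Hermitian operators supported on $W^\perp$ (in a basis adapted to $W^\perp \oplus W$, these are the block matrices that vanish outside the $W^\perp$ block).

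The crux is then a dimension count. Since $\widetilde E_j \neq 0$ we have $\dim W \geq 1$, so $\dim W^\perp \leq d-1$ and the space of Hermitian operators supported on $W^\perp$ has dimension $(\dim W^\perp)^2 \leq (d-1)^2$. But $(d-1)^2 < d^2 - 1$ for every $d \geq 2$, so the $d^2 - 1$ linearly independent MIC elements $\{E_i : i \neq j\}$ cannot all fit inside it, contradicting that $\{E_i\}$ is a basis. Therefore $\widetilde E_j$ is neither positive nor negative semidefinite, i.e. indefinite; as $j$ was arbitrary, the dual basis consists entirely of indefinite matrices. I do not expect a genuine obstacle here; the only points needing a little care are translating ``the range of $\widetilde E_j$ lies in the common kernel of the other $E_i$'' into the clean inequality above, and keeping in mind that the argument is applied to each dual element individually rather than to the set of them collectively.
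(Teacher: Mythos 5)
Your proof is correct and the core argument --- ruling out $\widetilde{E}_j \succeq 0$ by noting that the remaining $d^2-1$ linearly independent effects would be confined to a space of Hermitian operators of dimension at most $(d-1)^2 < d^2-1$ --- is exactly the paper's. The only (harmless) divergence is in the negative-semidefinite case, which you dispatch directly via $\tr \widetilde{E}_j = 1 > 0$, whereas the paper reduces it to the positive case by applying the same dimension count to $-\widetilde{E}_j$.
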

\begin{proof}
    Suppose $\widetilde{E}_1\geq 0$. The definition of a dual basis tells us $\tr\widetilde{E}_1E_k=0$ for all $k\neq1$. Because they are both positive semidefinite, $\widetilde{E}_1E_k=0$ for all $k\neq1$. This means the $d^2-1$ MIC elements other than $E_1$ are operators on a $d-\text{rank}(\widetilde{E}_1)$ dimensional subspace. But 
\begin{equation}
    \text{dim}\left[\mathcal{L}\left(\mathcal{H}_{d-\text{rank}(\widetilde{E}_1)}\right)\right]\leq(d-1)^2<d^2-1,
\end{equation}
so they cannot be linearly independent. If $\widetilde{E}_1\leq0$, $-\widetilde{E}_1$ is positive semidefinite and the same logic holds.
\end{proof}
\begin{corollary}\label{noprops}
    No element in a MIC can be proportional to an element of the MIC's
    dual basis.
\end{corollary}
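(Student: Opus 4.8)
The plan is to argue by contradiction: assume some MIC element $E_i$ equals a scalar multiple $c\widetilde{E}_j$ of some dual-basis element, where necessarily $c\neq 0$ (a basis element cannot be the zero operator), and then split into the cases $i\neq j$ and $i=j$.

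First I would dispose of the off-diagonal case $i\neq j$. Pairing $E_i = c\widetilde{E}_j$ with $\widetilde{E}_j$ under the Hilbert--Schmidt inner product gives $\tr E_i\widetilde{E}_j = c\tr\widetilde{E}_j^2$. The left-hand side is $\delta_{ij}=0$ by the defining property of the dual basis, while the right-hand side is nonzero because $\widetilde{E}_j$ is a nonzero Hermitian operator, so $\tr\widetilde{E}_j^2 = \norm{\widetilde{E}_j}^2 > 0$ and $c\neq 0$. This contradiction rules out $i\neq j$.

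For the diagonal case $i=j$, suppose $E_i = c\widetilde{E}_i$. Taking the trace against $\widetilde{E}_i$ now yields $1 = \tr E_i\widetilde{E}_i = c^{-1}\tr E_i^2$, hence $c = \tr E_i^2$. Since $E_i$ is a nonzero positive semidefinite operator, $\tr E_i^2 > 0$, so $c>0$ and therefore $\widetilde{E}_i = c^{-1}E_i$ is itself positive semidefinite. This contradicts the preceding theorem, which asserts that every element of a MIC's dual basis is indefinite. The work is light throughout; the only point requiring care is not to overlook the off-diagonal possibility and to note that the positivity of $c$ in the diagonal case is forced rather than assumed, so that the indefiniteness theorem genuinely applies.
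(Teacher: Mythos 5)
Your proof is correct. The paper treats this as an immediate corollary of the indefiniteness theorem and supplies no separate argument; the intended one-liner is that if $E_i = c\widetilde{E}_j$ with $c\neq 0$, then $\widetilde{E}_j = c^{-1}E_i$ is a nonzero real multiple of a positive semidefinite operator, hence positive semidefinite or negative semidefinite according to the sign of $c$ --- either way not indefinite. Your case split is therefore unnecessary: the off-diagonal case needs no separate treatment, and in the diagonal case you need not pin down the sign of $c$, since $c<0$ would make $\widetilde{E}_i$ negative semidefinite, which the theorem forbids just as much as positive semidefiniteness. That said, both of your cases are handled soundly, and the off-diagonal argument via $\tr E_i\widetilde{E}_j = \delta_{ij} = 0$ against $\tr\widetilde{E}_j^2 > 0$ is a legitimate, more elementary route for that subcase that does not even need the indefiniteness theorem.
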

\begin{corollary}\label{noortho}
    No MIC can form an orthogonal basis.
\end{corollary}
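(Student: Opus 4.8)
The plan is to derive Corollary~\ref{noortho} directly from the preceding theorem on dual bases. The key observation is that an orthogonal basis is, up to scaling, self-dual: if $\{E_i\}$ were an orthogonal basis of $\mathcal{L}(\mathcal{H}_d)$, then $\tr E_i E_j = 0$ for $i\neq j$, so the dual basis elements would be $\widetilde{E}_i = E_i / \tr E_i^2 = E_i/\norm{E_i}^2$. Each $\widetilde{E}_i$ would then be a positive scalar multiple of $E_i$, which is positive semidefinite. But the theorem just proved states that the dual basis of a MIC consists entirely of indefinite matrices, giving an immediate contradiction. Since a MIC is by definition a basis for $\mathcal{L}(\mathcal{H}_d)$, it cannot be an orthogonal one.

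First I would state the self-duality relation explicitly: orthogonality of the $\{E_i\}$ forces $\widetilde{E}_i$ to be proportional to $E_i$ for each $i$. Then I would invoke the theorem to conclude that no such proportionality can hold, since the MIC elements $E_i$ are positive semidefinite whereas the $\widetilde{E}_i$ must be indefinite. One could equally phrase this as an appeal to Corollary~\ref{noprops}: an orthogonal basis would make each MIC element proportional to its own dual, which that corollary forbids. Either route closes the argument in a line or two.

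There is essentially no obstacle here; the corollary is a packaging of the dual-basis theorem. The only point requiring a moment's care is the claim that orthogonality implies self-duality — one should note that $\tr E_i^2 > 0$ for every MIC element (each $E_i$ is nonzero, being part of a spanning set of $d^2$ linearly independent operators), so the division defining $\widetilde{E}_i = E_i/\tr E_i^2$ is legitimate and these operators genuinely satisfy the defining relation $\tr E_i\widetilde{E}_j = \delta_{ij}$ of the dual basis. With that in hand the contradiction with indefiniteness is immediate.
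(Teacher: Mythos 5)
Your proposal is correct and follows essentially the same route as the paper: orthogonality forces each dual element $\widetilde{E}_i$ to be a positive multiple of $E_i$, contradicting Corollary~\ref{noprops} (equivalently, the indefiniteness of the dual basis). The only cosmetic difference is that the paper additionally identifies the normalization constant as the weight $e_i$ by summing the orthogonality relation against $\sum_i E_i = I$, whereas you leave it as $\tr E_i^2$; the argument is unaffected.
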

\begin{proof}
    Suppose $\{E_i\}$ is a MIC which forms an orthogonal basis, that is, $\tr E_iE_j=c_i\delta_{ij}$ for some constants $c_i$. Summing this over $i$ reveals $c_j=e_j$, the weights of the MIC. Thus the dual basis is given by $\widetilde{E}_j=E_j/e_j=\rho_j$ which is a violation of Corollary \ref{noprops}.
\end{proof}
\begin{corollary}
    No MIC outcome can ever be assigned probability $1$.
\end{corollary}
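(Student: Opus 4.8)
The plan is to argue by contradiction and to cash in on the result established above that the dual basis $\{\widetilde{E}_j\}$ of a MIC consists entirely of indefinite matrices. Suppose, toward a contradiction, that some quantum state $\rho$ assigns probability $1$ to a MIC outcome, say $\tr\rho E_i = 1$. Since $\rho$ and every $E_j$ are positive semidefinite, each Born-rule probability $\tr\rho E_j$ is nonnegative; and since $\sum_j E_j = I$ with $\tr\rho = 1$, these probabilities sum to $1$. Nonnegative reals summing to $1$ with one of them equal to $1$ must have all the others equal to $0$, so $\tr\rho E_j = \delta_{ij}$.

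The second step is to feed these coefficients into the dual-basis expansion $\rho = \sum_j (\tr\rho E_j)\,\widetilde{E}_j$ derived just above. Every term vanishes except the $j = i$ term, leaving $\rho = \widetilde{E}_i$. But $\rho$ is positive semidefinite while $\widetilde{E}_i$ is indefinite --- a contradiction. Hence no MIC outcome is ever assigned probability $1$.

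There is essentially no obstacle here; the real content was packed into the earlier theorem on dual bases. If a self-contained derivation is preferred, the same conclusion follows by noting that $\tr\rho E_j = 0$ with $\rho, E_j \geq 0$ forces $\rho E_j = 0$, so the $d^2 - 1$ operators $\{E_j\}_{j\neq i}$ all act on $\ker\rho$, a subspace of dimension at most $d-1$; since $\dim\mathcal{L}(\mathcal{H}_{d-1}) = (d-1)^2 < d^2 - 1$, they cannot be linearly independent, contradicting the defining property of a MIC. This parallels the proof of the indefiniteness theorem.
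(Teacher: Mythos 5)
Your proof is correct and follows the same route as the paper's: use that a probability-1 outcome forces all other Born probabilities to vanish, so the dual-basis expansion collapses to $\rho = \widetilde{E}_i$, contradicting the indefiniteness of dual-basis elements. You merely make explicit the normalization step the paper leaves implicit, and your optional self-contained variant is the same kernel-dimension argument the paper already uses to prove the indefiniteness theorem itself.
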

\begin{proof}
    MIC probabilities provide the expansion coefficients for a state in the dual basis. If $P(E_i)=1$ for some $i$, the state would equal the dual basis element, but a state must be positive semidefinite.
\end{proof}
\begin{corollary}\label{noprojs}
    No effect of a MIC can be an unscaled projector.
\end{corollary}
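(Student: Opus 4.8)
The plan is to assume for contradiction that some effect $E_i$ of a MIC is an unscaled projector, $E_i^2 = E_i$, and then produce a quantum state which the Born Rule sends to probability $1$ on the $i^{\rm th}$ outcome, contradicting the preceding corollary that no MIC outcome can ever be assigned probability $1$.

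First I would record that an idempotent positive semidefinite operator is an orthogonal projector, so $r := \tr E_i = \text{rank}(E_i)$ is a positive integer (nonzero, since the $d^2$ effects of a MIC are linearly independent and hence none can vanish). Then $\rho := E_i / r$ is positive semidefinite with $\tr \rho = 1$, i.e.\ a legitimate state; this is the only place any care is needed, and it amounts to no more than noting $E_i \ge 0$ and normalizing. The Born Rule then gives
\begin{equation}
p(E_i) = \tr \rho E_i = \frac{1}{r}\tr E_i^2 = \frac{1}{r}\tr E_i = 1,
\end{equation}
where idempotence is used in the middle step. This contradicts the previous corollary, so no such $E_i$ exists.

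For rank-$1$ effects this is just the familiar observation that the state $\ketbra{\psi}{\psi}$ makes the outcome $E_i = \ketbra{\psi}{\psi}$ certain; the computation above is the general-rank version. Alternatively, one could bypass the probability-$1$ corollary entirely and argue by dimension counting as in the theorem on indefinite dual bases: if $E_i$ is a rank-$r$ projector, then $I - E_i = \sum_{j \ne i} E_j$ forces every remaining effect to be supported on the $(d-r)$-dimensional range of $I - E_i$, so the $d^2 - 1$ operators $\{E_j\}_{j \ne i}$ would inhabit a space of dimension $(d-r)^2 \le (d-1)^2 < d^2 - 1$ and could not be linearly independent. I anticipate no genuine obstacle; the only point to keep straight is that ``unscaled projector'' means true idempotence, which is precisely what delivers $\tr E_i^2 = \tr E_i$.
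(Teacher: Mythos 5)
Your proof is correct and takes essentially the same route as the paper's: both exhibit a quantum state that the Born Rule sends to probability $1$ on the putative projector outcome and then invoke the preceding corollary, the paper using a unit eigenvector of the projector where you use the normalized projector $E_i/r$ itself. (Your dimension-counting alternative is also sound, but the main argument matches the paper's.)
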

\begin{proof}
    Suppose $E_1$ were equal to an unscaled projector $P$. Then any eigenvector of $P$ is a pure state which would imply probability 1 for the MIC outcome $E_1$, which is impossible. 
\end{proof}
Theorem~\ref{noprops} and the subsequent corollaries have physical meaning. In classical probability
theory, we grow accustomed to orthonormal bases. For example, imagine
an object that can be in any one of $N$ distinct configurations. When
we write a probability distribution over these $N$ alternatives, we
are encoding our expectations about which of these configurations is
physically present --- about the ``physical condition'' of the object,
as Einstein would say~\cite{Stacey:2018}, or in more modern
terminology, about the object's ``ontic
state''~\cite{Spekkens:2007}. We can learn everything there is to know
about the object by measuring its ``physical condition'', and any
implementation of such an ideal measurement is represented by
conditional probabilities that are 1 in a single entry and 0
elsewhere. In other words, the map from the object's physical
configuration to the reading on the measurement device is, at its
most complicated, a permutation of labels. Without loss of generality,
we can take the vectors that define the ideal measurement to be the
vertices of the probability simplex: The measurement basis is
identical with its dual, and the dual-basis elements simply label the
possible ``physical conditions'' of the object which the measurement
reads off.

In quantum theory, by contrast, no element of a MIC may be proportional to an element in the dual. This stymies the identification of the dual-basis elements as intrinsic ``physical conditions'' ready for a measurement to read.

\begin{theorem}\label{nopovm}
    No elementwise rescaling of a proper subset of a MIC may form a POVM. 
\end{theorem}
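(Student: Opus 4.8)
The plan is a short argument by contradiction that uses only two structural facts about a MIC: its $d^2$ elements are linearly independent (being $d^2$ operators that span the $d^2$-dimensional space $\mathcal{L}(\mathcal{H}_d)$, they form a basis), and they sum to $I$. Suppose, for contradiction, that there is a proper subset $S\subsetneq\{1,\ldots,d^2\}$ and scalars $c_i$ for $i\in S$ such that $\{c_iE_i\}_{i\in S}$ is a POVM. Then in particular
\begin{equation}
\sum_{i\in S} c_i E_i = I .
\end{equation}
Since $\{E_i\}$ is itself a POVM we also have $\sum_{i=1}^{d^2} E_i = I$, and subtracting the two relations yields
\begin{equation}
\sum_{i\in S}(c_i-1)E_i \;-\; \sum_{i\notin S} E_i \;=\; 0 .
\end{equation}
Linear independence of the MIC elements forces every coefficient in this relation to vanish. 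But $S$ is proper, so $\{1,\ldots,d^2\}\setminus S$ is nonempty, and for any index $j$ in it the corresponding coefficient is $-1\neq 0$ --- a contradiction, completing the proof.

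I do not anticipate a genuine obstacle here; the only point worth flagging is that the positive-semidefiniteness built into the word ``POVM'' is never actually used. What the argument will really show is that the linear map $(a_i)\mapsto\sum_i a_i E_i$ from $\mathbb{R}^{d^2}$ to $\mathcal{L}(\mathcal{H}_d)$ is injective, so the all-ones vector is the \emph{unique} coefficient list expanding $I$ in the MIC basis; consequently $I$ lies in the span of no proper subset of a MIC, rescaled or not. I would note that this statement subsumes the earlier corollary that no MIC outcome can be assigned probability $1$, and that it expresses the precise sense in which a MIC is ``irreducible'' as a reference measurement: one cannot discard any outcome and repair normalization by reweighting the survivors.
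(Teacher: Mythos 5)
Your argument is correct and is essentially identical to the paper's own proof: both subtract the subset's identity decomposition from $\sum_{i=1}^{d^2}E_i=I$ and invoke linear independence, noting that the coefficients on the omitted elements cannot vanish. Your added observation that positivity is never used (so the result is really about the uniqueness of the expansion of $I$ in the MIC basis) is a fair and accurate gloss, but the proof itself matches the paper's.
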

\begin{proof}
    Since a MIC is a linearly independent set, the identity element is uniquely formed by the defining expression
    \begin{equation}
        {I}=\sum_{i=1}^{d^2}E_i.
        \label{povmdef}
    \end{equation}
    If a linear combination of a proper subset $\Omega$ of the MIC elements could be made to also sum to the identity, 
\begin{equation}
    {I}=\sum_{i\in \Omega}\alpha_i E_i,
    \label{lincomb}
\end{equation}
then subtracting \eqref{lincomb} from \eqref{povmdef} implies
\begin{equation}
    0=\sum_{i\in\Omega}(1-\alpha_i)E_i+\sum_{i\notin\Omega}E_i
\end{equation}
which is a violation of linear independence.
\end{proof}
\begin{corollary}\label{d2ortho}
    No two elements in a $d=2$ MIC may be orthogonal under the Hilbert--Schmidt inner product.
\end{corollary}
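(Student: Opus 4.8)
The plan is to assume two elements of a $d=2$ MIC are Hilbert–Schmidt orthogonal and derive a contradiction with Theorem~\ref{nopovm}. Relabel so that $\tr E_1 E_2 = 0$. Since $E_1$ and $E_2$ are positive semidefinite, the fact recalled just before Theorem~\ref{nopovm} gives $E_1 E_2 = 0$, which says $\mathrm{ran}(E_2)\subseteq\ker(E_1)$; because $E_1$ is Hermitian, this is equivalent to the statement that the ranges of $E_1$ and $E_2$ are orthogonal subspaces of $\mathcal{H}_2$.

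Next I would pin down the ranks. Both $E_1$ and $E_2$ are nonzero, since they belong to a spanning set, so each range has dimension at least $1$; as $\mathcal{H}_2$ is two-dimensional and the two ranges are orthogonal, each range has dimension exactly $1$ and together they span $\mathcal{H}_2$. Hence $E_1 = e_1 P_1$ and $E_2 = e_2 P_2$, where $P_1,P_2$ are rank-$1$ projectors onto orthogonal lines with $P_1 + P_2 = I$, and where $e_1 = \tr E_1 > 0$ and $e_2 = \tr E_2 > 0$.

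Finally I would rescale: $\{E_1/e_1,\, E_2/e_2\} = \{P_1,P_2\}$ is a two-outcome projective measurement, in particular a POVM, and it is obtained by an elementwise rescaling of the proper subset $\{E_1,E_2\}$ of the four-element MIC $\{E_1,E_2,E_3,E_4\}$. This contradicts Theorem~\ref{nopovm}, completing the argument. (One could instead argue directly that $E_1$ and $E_2$ span the two-dimensional space of operators diagonal in the common eigenbasis of $P_1,P_2$, which forces $E_3+E_4 = I - E_1 - E_2$ to lie in that span and hence produces a nontrivial linear dependence among the MIC elements; but invoking Theorem~\ref{nopovm} is cleaner.)

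I do not expect a serious obstacle here. The only point requiring care is the rank bookkeeping in the middle step: one must check that ``orthogonal ranges'' together with ``both operators nonzero'' really does force both to be rank-$1$ in dimension $2$, rather than permitting one of them to degenerate. This is exactly where $d=2$ is essential, and the argument does not carry over verbatim to higher dimensions.
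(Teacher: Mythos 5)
Your proposal is correct and follows the same route as the paper: the paper's proof likewise observes that an orthogonal pair in dimension $2$ can be rescaled to sum to the identity and then invokes Theorem~\ref{nopovm}. You have simply filled in the rank bookkeeping (orthogonal ranges of two nonzero positive semidefinite operators in $\mathcal{H}_2$ force both to be proportional to complementary rank-$1$ projectors), which the paper leaves implicit.
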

\begin{proof}
    An orthogonal pair of elements in dimension $2$ may be rescaled such that they sum to the identity element. Therefore, by Theorem \ref{nopovm}, they cannot be elements of a MIC.
\end{proof}
These results also have physics
implications. For much of the history of quantum mechanics, one type
of POVM had special status: the \emph{von Neumann measurements,} which
consist of $d$ elements given by the projectors onto the vectors of an
orthonormal basis of $\mathbb{C}^d$. Indeed, in older books, these are
the only quantum measurements that are considered (often being defined
as the eigenbases of Hermitian operators called ``observables''). We
can now see that, from the standpoint of informational completeness,
the von Neumann measurements are rather pathological: There is no way
to build a MIC by augmenting a von Neumann measurement with additional
outcomes. 

Another holdover from the early days of quantum theory concerns the process of updating a quantum state in response to a measurement outcome. If one restricts attention to von Neumann measurements, one may feel tempted to grant special importance to the post-measurement state being one of the eigenvectors of an ``observable''. This type of updating is a special case of the more general theory developed as quantum mechanics was understood more fully. The \emph{L\"uders
Rule} \cite{Busch:2009,Barnum:2002}
states that the post-measurement state upon obtaining the outcome associated with effect $E_i$ for a POVM $\{E_i\}$ is 
\begin{equation}
    \rho_i':=\frac{\sqrt{E_i}\rho\sqrt{E_i}}{\tr \rho E_i}\;.
\end{equation}
In the special case of a von Neumann measurement, this reduces to replacing the state for the system with the eigenprojector corresponding to the measurement outcome. A physicist who plans to follow that procedure and then repeat the measurement immediately afterward would expect to obtain the same outcome twice in succession. Some authors regard this possibility as the essential point of contact with classical mechanics and attempt to build an
understanding of quantum theory around such ``ideal'' measurements~\cite{Cabello:2019}. But, as we said in the introduction, obtaining the same outcome twice in succession is not a good notion of a ``classical ideal''. Especially in view of the arbitrariness of von Neumann measurements from our perspective, we regard this possibility as conceptually downstream from the phenomenon of informationally complete measurements. 

Corollary \ref{d2ortho} prompts a question: May any elements of a MIC in arbitrary dimension be orthogonal? In other words, can any entry in a $G$ matrix equal zero? We answer this question in the affirmative with an explicit example of a rank-$1$ MIC in dimension $3$ with $7$ orthogonal pairs.
\begin{example}\label{7orthopairs}
    When multiplied by $1/3$, the following is a rank-1 unbiased MIC in dimension $3$ with $7$ orthogonal pairs.
    \renewcommand\arraystretch{1.2}
    \begin{equation}
        \begin{split}
            &\left\{\begin{bmatrix}
        1 & 0 & 0 \\
        0&0&0\\
        0&0&0
    \end{bmatrix},
    \begin{bmatrix}
        0 & 0 & 0 \\
        0&1&0\\
        0&0&0
    \end{bmatrix},
    \begin{bmatrix}
        \frac{1}{2} & 0 & \frac{1}{2} \\
        0&0&0\\
        \frac{1}{2}&0&\frac{1}{2}
    \end{bmatrix},
    \begin{bmatrix}
        0 & 0 & 0 \\
        0&\frac{1}{2}&\frac{1}{2}\\
        0&\frac{1}{2}&\frac{1}{2}
    \end{bmatrix},
    \begin{bmatrix}
        \frac{1}{2} & 0 & \frac{i}{2} \\
        0&0&0\\
        -\frac{i}{2}&0&\frac{1}{2}
    \end{bmatrix},
    \begin{bmatrix}
        0 & 0 & 0 \\
        0&\frac{1}{2}&\frac{i}{2}\\
        0&-\frac{i}{2}&\frac{1}{2}
    \end{bmatrix},
    \begin{bmatrix}
        \frac{1}{3} & \frac{i}{3} & -\frac{i}{3} \\
        -\frac{i}{3}&\frac{1}{3}&-\frac{1}{3}\\
        \frac{i}{3}&-\frac{1}{3}&\frac{1}{3}
    \end{bmatrix},\right.\\
    &\left. \begin{bmatrix}
        \frac{5}{8} & -\frac{1}{8}-\frac{i}{4} & -\frac{3}{8}-\frac{i}{8} \\
        -\frac{1}{8}+\frac{i}{4}&\frac{1}{8}&\frac{1}{8}-\frac{i}{8}\\
        -\frac{3}{8}+\frac{i}{8}&\frac{1}{8}+\frac{i}{8}&\frac{1}{4}
    \end{bmatrix},
    \begin{bmatrix}
        \frac{1}{24} & \frac{1}{8}-\frac{i}{12} & -\frac{1}{8}-\frac{i}{24} \\
        \frac{1}{8}+\frac{i}{12}&\frac{13}{24}&-\frac{7}{24}-\frac{3i}{8}\\
        -\frac{1}{8}+\frac{i}{24}&-\frac{7}{24}+\frac{3i}{8}&\frac{5}{12}
    \end{bmatrix}\right\}.
\end{split}
\end{equation}
These are projectors onto the following vectors in $\mathcal{H}_d$:
\begin{equation}
    \begin{split}
        &\left\{(1,0,0),(0,1,0),\frac{1}{\sqrt{2}}(1,0,1),\frac{1}{2}(0,1,1),\frac{1}{\sqrt{2}}(1,0,-i),\frac{1}{\sqrt{2}}(0,1,-i),\frac{1}{\sqrt{3}}(1,-i,i),\right.\\
        &\left.\frac{1}{\sqrt{40}}(5,-1+2i,-3+i),\frac{1}{\sqrt{24}}(1,3+2i,-3+i)\right\}.
\end{split}
\end{equation}
The Gram matrix of the MIC elements is
\renewcommand\arraystretch{1.2}
\begin{equation}
\begin{bmatrix}
    \frac{1}{9}&0&\frac{1}{18}&0&\frac{1}{18}&0&\frac{1}{27}&\frac{5}{72}&\frac{1}{216}\\
    0&\frac{1}{9}&0&\frac{1}{18}&0&\frac{1}{18}&\frac{1}{27}&\frac{1}{72}&\frac{13}{216}\\

    \frac{1}{18}&0&\frac{1}{9}&\frac{1}{36}&\frac{1}{18}&\frac{1}{36}&\frac{1}{27}&\frac{1}{144}&\frac{5}{432}\\
    0&\frac{1}{18}&\frac{1}{36}&\frac{1}{9}&\frac{1}{36}&\frac{1}{18}&0&\frac{5}{144}&\frac{1}{48}\\
    \frac{1}{18}&0&\frac{1}{18}&\frac{1}{36}&\frac{1}{9}&\frac{1}{36}&0&\frac{5}{144}&\frac{1}{48}\\
    0&\frac{1}{18}&\frac{1}{36}&\frac{1}{18}&\frac{1}{36}&\frac{1}{9}&\frac{1}{27}&\frac{1}{144}&\frac{5}{432}\\
\frac{1}{27}&\frac{1}{27}&\frac{1}{27}&0&0&\frac{1}{27}&\frac{1}{9}&\frac{1}{54}&\frac{1}{18}\\
\frac{5}{72}&\frac{1}{72}&\frac{1}{144}&\frac{5}{144}&\frac{5}{144}&\frac{1}{144}&\frac{1}{54}&\frac{1}{9}&\frac{1}{27}\\
\frac{1}{216}&\frac{13}{216}&\frac{5}{432}&\frac{1}{48}&\frac{1}{48}&\frac{5}{432}&\frac{1}{18}&\frac{1}{27}&\frac{1}{9}\\
\end{bmatrix}.
\end{equation}
\end{example}
\noindent The numerical search resulting in this example led us to formulate the following:
\begin{conjecture}
A rank-1 MIC in dimension 3 can have no more than 7 pairs of
orthogonal elements.
\end{conjecture}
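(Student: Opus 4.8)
The plan is to recast the statement as a question about a graph on the $d^2 = 9$ MIC elements and then whittle down the possibilities. To a rank-$1$ MIC $\{E_i = e_i\ketbra{\psi_i}{\psi_i}\}$ in $\mathcal{H}_3$, associate the \emph{orthogonality graph} $\Gamma$, with $i\sim j$ precisely when $\tr E_iE_j = 0$, equivalently when $\braket{\psi_i}{\psi_j} = 0$; the conjecture asserts $\abs{E(\Gamma)}\le 7$. Two necessary conditions on $\Gamma$ are immediate from the results above. First, $\Gamma$ is triangle-free: three mutually orthogonal unit vectors in $\mathbb{C}^3$ form an orthonormal basis, so $\rho_i+\rho_j+\rho_k = I$, and then $(1/e_i)E_i+(1/e_j)E_j+(1/e_k)E_k = I$ is an elementwise rescaling of a proper subset of the MIC summing to the identity, contradicting Theorem~\ref{nopovm}. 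Second, $\Delta(\Gamma)\le 3$: if $i$ had four neighbours $a,b,c,d$, the projectors $\rho_a,\rho_b,\rho_c,\rho_d$ would be four linearly independent vectors in the $4$-dimensional space $\mathcal{L}\bigl(\ket{\psi_i}^\perp\bigr)$, hence a basis of it, so the projector $I-\rho_i$ onto $\ket{\psi_i}^\perp$ would lie in their span; comparing with the unique expansion $I = \sum_k e_k\rho_k$ would then force $e_k = 0$ for the (at least four) indices outside $\{i,a,b,c,d\}$, contradicting positivity of the weights.

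These two conditions alone only give $\abs{E(\Gamma)}\le\lfloor 27/2\rfloor = 13$, so the crux is the gap between $13$ and $7$. My next step would be to mine the degree-$3$ vertices for rigidity, with the aim of bounding how many there can be. If $i$ has neighbours $a,b,c$, then $\ket{\psi_a},\ket{\psi_b},\ket{\psi_c}$ all lie on the projective line $\mathbb{P}\bigl(\ket{\psi_i}^\perp\bigr)$, are pairwise non-orthogonal (a further orthogonality would create a triangle), and, viewed as Bloch vectors of that qubit, are non-collinear (otherwise $\rho_a,\rho_b,\rho_c$ are linearly dependent). Because adjacent vertices have disjoint neighbourhoods, one can constrain how the degree-$3$ vertices interact; combining this combinatorial bookkeeping with the single global relation $\sum_k e_k\rho_k = I$, the positive-semidefiniteness ($3\times 3$ Gram) constraints on the $\ket{\psi_i}$, and the $d=2$ obstruction of Corollary~\ref{d2ortho} applied inside the qubit subspaces that get spanned, one hopes to eliminate every triangle-free, $\Delta\le 3$ graph on $9$ vertices with more than $7$ edges. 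There are only finitely many such graphs up to isomorphism, so one ultimately reduces to checking, for each candidate, that the polynomial system ``nine unit vectors in $\mathbb{C}^3$ realizing the prescribed orthogonalities, with linearly independent projectors and a strictly positive expansion of $I$'' is infeasible.

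The main obstacle is precisely that last reduction: triangle-freeness and the degree bound are necessary but far from sufficient for realizability, and the real content lives in the coupling of several orthogonality incidences to the lone positivity constraint $\sum_k e_k\rho_k = I$ with all $e_k>0$ --- even ruling out $\Gamma = C_9$ appears to require genuine work. A machine-assisted sweep through the remaining candidate graphs, solving each feasibility problem, would likely settle the conjecture but would not explain it. A cleaner proof probably needs a new ingredient applied directly to the Gram matrix $G$, which for a rank-$1$ MIC is positive definite with nonnegative entries, diagonal $[G]_{ii} = e_i^2$, row sums $\sum_j[G]_{ij} = e_i$, total entry-sum $\sum_{ij}[G]_{ij} = \tr I = 3$, and factorization $[G]_{ij} = e_ie_j\,\abs{[H]_{ij}}^2$ with $H = \bigl(\braket{\psi_i}{\psi_j}\bigr)$ positive semidefinite of rank at most $3$; extracting from this structure a bound on the number of vanishing off-diagonal entries would be the heart of such an argument.
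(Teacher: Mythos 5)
First, a point of calibration: the paper offers no proof of this statement --- it is presented there as a conjecture motivated by the numerical search that produced Example~\ref{7orthopairs} --- so there is no argument of record to compare yours against, and any correct proof would be new. Within that framing, your two structural lemmas are correct and constitute genuine progress. Triangle-freeness of the orthogonality graph follows exactly as you say: three mutually orthogonal unit vectors in $\mathbb{C}^3$ give $\rho_i+\rho_j+\rho_k=I$, which rescales a proper subset of the MIC into a POVM, contradicting Theorem~\ref{nopovm} (equivalently, it violates linear independence directly). The degree bound $\Delta(\Gamma)\le 3$ is also sound: four neighbours of $i$ supply four linearly independent Hermitian operators supported on the two-dimensional space $\ket{\psi_i}^{\perp}$, hence a basis of its four-dimensional operator space, so $I=\rho_i+\sum_x c_x\rho_x$ is an expansion of the identity over only five of the nine effects; uniqueness of the expansion in a linearly independent set then forces four of the weights $e_k=\tr E_k>0$ to vanish, a contradiction. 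Both conditions are consistent with the Gram matrix of Example~\ref{7orthopairs}, whose orthogonality graph is triangle-free with degree sequence $(3,3,2,2,2,1,1,0,0)$.

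The gap is that nothing in the writeup gets you from the resulting bound of $\lfloor 27/2\rfloor=13$ edges down to $7$, and you acknowledge this yourself. The continuation you sketch --- enumerate the triangle-free, $\Delta\le 3$ graphs on nine vertices with eight or more edges and show each fails to be realizable by nine unit vectors in $\mathbb{C}^3$ with linearly independent projectors and a strictly positive resolution $\sum_k e_k\rho_k=I$ --- is a program, not an argument: you do not demonstrate that the local constraints you list (pairwise non-orthogonality and linear independence of the neighbours of a degree-$3$ vertex, disjoint neighbourhoods of adjacent vertices) couple to the single global positivity constraint strongly enough to exclude even one eight-edge candidate such as $C_9$ plus a pendant structure, let alone all of them, and the realizability problems in question are nontrivial polynomial feasibility questions over $\mathbb{C}^3$. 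So the conjecture remains open after your proposal; what you have established is a correct pair of necessary conditions, a ceiling of $13$ orthogonal pairs, and a plausible but unexecuted plan for closing the remaining distance to $7$. Your closing observation --- that a conceptual proof should probably operate directly on the positive-definite Gram matrix $G$ with $[G]_{ij}=e_ie_j\abs{\braket{\psi_i}{\psi_j}}^2$ and bound its number of vanishing off-diagonal entries --- identifies the right object but is not yet an argument.
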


Our next result characterizes when it is possible to build a rank-1
POVM out of a set of vectors and specifies the additional 
conditions which must be met in order for it to form a MIC. We make use of the \textit{Hadamard product}~\cite{Horn:1994}, denoted $\circ$, which is elementwise multiplication of matrices.

\begin{theorem}\label{r1povmtightframe}
    Consider a set of $N$ normalized vectors $\ket{\phi_i}$ in $\mathcal{H}_d$ and real numbers $0 \leq e_i\leq 1$. The following are equivalent:
    \begin{enumerate}
        \item $E_i:= e_i\ketbra{\phi_i}{\phi_i}$ forms a rank-1 POVM.
        \item The Gram matrix $g$ of the rescaled vectors $\sqrt{e_i}\ket{\phi_i}$ is a rank-$d$ projector.
    \end{enumerate}
    Furthermore, if $N=d^2$ and ${\rm rank}(g\circ g^*)=d^2$, $\{E_i\}$ forms a rank-1 MIC.
\end{theorem}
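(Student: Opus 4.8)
The plan is to route everything through the \emph{synthesis matrix} $V$, the $d\times N$ matrix whose $i$-th column is the rescaled vector $\ket{w_i} := \sqrt{e_i}\,\ket{\phi_i}$. Then $VV^\dagger = \sum_i \ketbra{w_i}{w_i} = \sum_i E_i$, while $V^\dagger V$ is precisely the Gram matrix $g$ appearing in condition~(2). For (1)$\Rightarrow$(2): if the $E_i$ form a POVM then $VV^\dagger = I$, so $V^\dagger$ is an isometry of $\mathbb{C}^d$ into $\mathbb{C}^N$ and $g = V^\dagger V$ is the orthogonal projection onto the $d$-dimensional subspace $\mathrm{range}(V^\dagger)$. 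For (2)$\Rightarrow$(1): the nonzero portions of the spectra of $V^\dagger V$ and $VV^\dagger$ coincide with multiplicities, so if $g$ is a rank-$d$ projector then the $d\times d$ positive semidefinite matrix $VV^\dagger$ has all $d$ of its eigenvalues equal to $1$, forcing $VV^\dagger = I$; thus $\sum_i E_i = I$ and each $E_i = \ketbra{w_i}{w_i}$ is rank-$1$ and positive semidefinite. (The stated bound $e_i \le 1$ need not be invoked separately: $e_i = [g]_{ii}$ is a diagonal entry of an orthogonal projector and hence automatically lies in $[0,1]$.)

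For the ``furthermore'' clause, the crux is that the Hilbert--Schmidt Gram matrix $G$ of the \emph{operators} $\{E_i\}$ is exactly the Hadamard square of the Gram matrix $g$ of the \emph{vectors}:
\begin{equation}
[G]_{ij} = \tr E_i E_j = e_i e_j\,\abs{\braket{\phi_i}{\phi_j}}^2 = \abs{[g]_{ij}}^2 = [g\circ g^*]_{ij}\;,
\end{equation}
so $G = g\circ g^*$. Since each $E_i$ is Hermitian, $\tr E_i E_j$ is real and $G$ is the ordinary (real, symmetric, positive semidefinite) Gram matrix of the $E_i$ viewed as elements of the inner-product space $\mathcal{L}(\mathcal{H}_d)$; therefore $G$ is nonsingular if and only if the $E_i$ are linearly independent.

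Assembling the pieces: assuming (1)--(2) hold with $N=d^2$, we have $d^2$ rank-$1$ positive semidefinite operators summing to the identity. If moreover $\mathrm{rank}(g\circ g^*) = d^2$, i.e.\ $\mathrm{rank}(G)=d^2$, then $\{E_i\}$ is linearly independent, and a linearly independent family of $d^2 = \dim\mathcal{L}(\mathcal{H}_d)$ Hermitian operators necessarily spans $\mathcal{L}(\mathcal{H}_d)$; hence the POVM is informationally complete, and with exactly $d^2$ outcomes it is a rank-$1$ MIC. I do not anticipate a genuine obstacle: the substantive ingredient is the identity $G = g\circ g^*$ (which is exactly why the Hadamard product is the natural tool here), together with the standard frame-theoretic fact that Parseval tight frames are characterized by their Gram matrices being orthogonal projectors. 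The two points that deserve a line of care are confirming that a rank-$d$ projector $g$ truly forces $VV^\dagger = I$ rather than merely having unit nonzero eigenvalues --- dispatched by the $d\times d$ size count above --- and noting that the hypothesis $\mathrm{rank}(g\circ g^*)=d^2$ tacitly excludes any $e_i=0$, since a vanishing weight would annihilate the $i$-th row of $g$ and hence of $g\circ g^*$.
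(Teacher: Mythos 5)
Your proof is correct and follows essentially the same route as the paper: both arguments rest on the fact that $g$ and $\sum_i E_i$ share their nonzero spectrum (which the paper cites from the tight-frame literature and you derive explicitly from $V^\dagger V$ versus $VV^\dagger$), and both establish the MIC clause via the identity $G = g\circ g^*$ together with the equivalence of linear independence and nonsingularity of the Gram matrix. Your explicit synthesis-matrix bookkeeping and the observation that $e_i\le 1$ is automatic are welcome elaborations, but they do not change the substance of the argument.
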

\begin{proof}
    Suppose $E_i$ forms a rank-1 POVM, that is, 
    \begin{equation}\label{rank1povm}
        \sum_ie_i\ketbra{\phi_i}{\phi_i}=I\;.
    \end{equation}
    It is easy to see that this is only possible if the set $\{\sqrt{e_i}\ket{\phi_i}\}$ spans $\mathcal{H}_d$, and, consequently, $N\geq d$. It now follows that $g$ is a rank-$d$ projector because the left hand side of \eqref{rank1povm} is a matrix that has the same nonzero spectrum as $g$~\cite{Waldron:2018}. On the other hand, if $g$ is a rank-$d$ projector, $N\geq d$ and
    $\{\sqrt{e_i}\ket{\phi_i}\}$ spans a $d$ dimensional space because the rank of a Gram matrix is equal to the dimension of the space spanned by the vectors. Using again the fact the left hand side of \eqref{rank1povm} has the same nonzero spectrum as the Gram matrix, it must equal the identity and thus the rank-1 POVM condition holds.

 To be a MIC, $N$ must equal $d^2$. The remaining condition on $\{E_i\}$ for it to form a rank-1 MIC is that its elements be linearly independent. This is equivalent to the condition that its Gram matrix $G$ is full rank. The relation between $g$ and $G$ is given by the Hadamard product of $g$ with its conjugate,
\begin{equation}
    g \circ g^*=G\;,
\end{equation}
and so, if $N=d^2$ and ${\rm rank}(g\circ g^*)=d^2$, $\{E_i\}$ forms a rank-1 MIC. 
\end{proof}
\noindent For any two matrices $A$ and $B$, the Hadamard product satisfies the
rank inequality
\begin{equation}\label{rank}
    \text{rank}(A\circ B) \leq \text{rank}(A)\;\text{rank}(B)\;,
\end{equation}
so a rank-1 MIC is produced when $\text{rank}(g\circ g^*)$ achieves its maximal value with the minimal number of effects. Perhaps this criterion will lead to a way to conceptualize rank-1 MICs directly in terms of the vectors in $\mathcal{H}_d$ from which they can be constructed. 

As a brief illustration, all rank-$d$ projectors are unitarily equivalent so the specification of $g$ for the rescaled vectors of any rank-1 MIC is obtainable from any rank-$d$ projector by conjugating it with the right unitary.
Specifying $g$ specifies the MIC: $g$ is equal to its own square root, so its columns are the vectors up to unitary equivalence which form this Gram matrix. To obtain these vectors as elements of $\mathcal{H}_d$, one can simply write them in the basis provided by the eigenvectors of $g$ with nonzero eigenvalues. From a fixed starting projector, then, finding a rank-1 MIC is equivalent to choosing a unitary in $U(d^2)$ which maximizes $\text{rank}(g\circ g^*)$. Numerically this
maximization appears to be typical, but we are not aware of an explicit characterization. A further question to
ask is whether there are special classes of unitaries which give particular types of MICs. 

We finish this section with a very brief discussion of the geometry of MIC space. For this purpose it is not necessary to distinguish between MICs which differ only in permutations of their effects. We further discuss this in the loose sense of not having chosen any particular metric. The full sets of $N$-outcome POVMs are in general convex manifolds~\cite{DAriano:2005}, but the requirement of linear independence prevents this from being true for MICs --- it is possible for a convex mixture of MICs to introduce a
linear dependence and thus step outside of the set. There do, however, exist infinite sequences and curves lying entirely within the set of MICs. In these terms one can see that the space of MICs lacks much of its boundary, that is, one can construct infinite sequences of MICs for which the limit point is not a MIC. The simplest such limit point is the POVM consisting of the identity and $d^2-1$ zero matrices. Similarly there are MICs arbitrarily close to any POVM with fewer than
$d^2$ elements which has been padded by zero matrices. Among unbiased MICs, another limit point lying outside of the set is the trivial POVM consisting of $d^2$ identical matrices $E_i=\frac{1}{d^2}I$. Provided they exist, SICs are limit points, at least among equiangular MICs (see section \ref{equiangularMICs}), which are contained within the set.

\section{Explicit Constructions of MICs}
\label{sec:constructions}
\subsection{SICs}
The MICs that have attracted the most interest are the SICs, which in
many ways are the optimal MICs~\cite{Appleby:2014b, Appleby:2015,
  DeBrota:2018, Fuchs:2003, Scott:2006}.  SICs were studied as
mathematical objects (under the name ``complex equiangular lines'')
before their importance for quantum information was
recognized~\cite{Delsarte:1975, Hoggar:1981, Coxeter:1991,
  Hoggar:1998}.  Prior to SICs becoming a physics problem,
constructions were known for dimensions $d = 2$, 3 and
8~\cite{Konig:2001}. Exact solutions for SICs are now known in 79
dimensions:
\begin{equation}
    \begin{split}
    d = 2&\hbox{--}28,30,31,35,37\hbox{--}39,42,43,48,49,52,53,57,61\hbox{--}63,67,73,74,78,79,84,91,93,\\
    & 95,97\hbox{--}99,103,109,111,120,124,127,129,134,143,146,147,168,172,195,199,\\
    & 228,259,292,323,327,399,489,844,1299.
\end{split}
\end{equation}
The expressions for these solutions grow complicated quickly, but
there is hope that they can be substantially
simplified~\cite{Appleby:2018}. Numerical solutions have also been
extracted, to high precision, in the following dimensions:
\begin{equation}
    d= 2\hbox{--}193,204,224,255,288,528,725,1155,2208.
\end{equation}
Both the numerical and the exact solutions have been found in
irregular order and by various methods. Many entries in these lists
are due to A.\ J.\ Scott and M.\ Grassl~\cite{Scott:2010a, Scott:2017,
  Grassl:2017}; other explorers in this territory include M.\ Appleby,
I.\ Bengtsson, T.-Y.\ Chien, S.\ T.\ Flammmia, G.\ S.\ Kopp and
S.\ Waldron.

Together, these results have created the community sentiment that SICs
\emph{should} exist for every finite value of~$d$. To date, however, a
general proof is lacking. The current frontier of SIC research extends
into algebraic number theory~\cite{Appleby:2013, Appleby:2016,
  Bengtsson:2016, Appleby:2017b, Kopp:2018}, which among other things
has led to a method for uplifting numerical solutions to exact
ones~\cite{Appleby:2017}. The topic has begun to enter the textbooks
for physicists~\cite{Bengtsson:2017} and for
mathematicians~\cite{Waldron:2018}.

The effects of a SIC are given by 
\begin{equation}
    E_i = \frac{1}{d}\Pi_i, \hbox{ where } \Pi_i = \ketbra{\pi_i}{\pi_i}\;,
\end{equation}
where we will take the liberty of calling any of the sets $\{E_i\}$, $\{\Pi_i\}$, and $\{\ket{\pi_i}\}$ SICs.
It is difficult to find a meaningful visualization of structures in
high-dimensional complex vector space. However, for the $d = 2$ case,
an image is available. Any quantum state for a 2-dimensional system
can be written as an expansion over the Pauli matrices:
\begin{equation}
  \rho = \frac{1}{2}\left(I + x\sigma_x + y\sigma_y +
  z\sigma_z\right).
\end{equation}
The coefficients $(x,y,z)$ are then the coordinates for $\rho$ in the
\emph{Bloch ball}. The surface of this ball, the \emph{Bloch sphere,}
lives at radius 1 and is the set of pure states. In this picture, the
quantum states $\{\Pi_i\}$ comprising a SIC form a regular
tetrahedron; for example,
\begin{equation}
  \Pi_{s,s'} = \frac{1}{2}\left(I + \frac{1}{\sqrt{3}}\left(s\sigma_x + s'\sigma_y +  ss'\sigma_z\right)\right),
\end{equation}
where $s$ and $s'$ take the values $\pm 1$.

The matrix $G_{\rm SIC}$ has the spectrum
\begin{equation}
  \lambda(G_{\rm SIC}) = \left(
  \frac{1}{d}, \frac{1}{d(d+1)}, \ldots, \frac{1}{d(d+1)}
  \right).
\end{equation}
The flatness of this spectrum will turn out to be significant; we will
investigate this point in depth in the next section.


\subsection{MICs from Random Bases}\label{sec:micfrombases}
It is possible to construct a MIC for any dimension $d$. Let $\{A_i\}$ be any basis of positive semidefinite operators in $\mathcal{L}(\mathcal{H}_d)$ and define $\Omega:=\sum_i A_i$. Then 
\begin{equation}\label{arbitraryMIC}
    E_i:=\Omega^{-1/2}A_i\Omega^{-1/2}
\end{equation}
forms a MIC. If $\{A_i\}$ consists entirely of rank-1 matrices, we obtain a rank-1 MIC.\footnote{In the rank-1 case, this procedure is equivalent to forming what is called the \textit{canonical tight frame} associated with the frame of vectors in $\mathcal{H}_d$ whose outer products form the $A_i$ matrices. For more information on this, see \cite{Waldron:2018}.} If $\{A_i\}$ is already a MIC, $\Omega=I$ and the transformation is trivial; MICs are the fixed points of this mapping from one positive
semidefinite operator basis to another. Thanks to this property, this method can produce \textit{any} MIC if the initial basis is drawn from the full space of positive semidefinite
operators. 


This procedure was used by Caves, Fuchs and Schack in the course of
proving a quantum version of the de Finetti
theorem~\cite{Caves:2002c}. (For background on this theorem, a key
result in probability theory, see~\cite[\S 5.3]{Stacey:2015}
and~\cite{Diaconis:2017}.) We refer to the particular MICs they constructed as the
\emph{orthocross MICs.} As the orthocross MICs are of historical importance, we explicitly detail their construction and provide some first properties and conjectures about it in the remainder of this subsection.

To construct an orthocross MIC in dimension
$d$, first pick an orthonormal basis $\{\ket{j}\}$. This is a set of
$d$ objects, and we want a set of $d^2$, so our first step is to take
all possible combinations:
\begin{equation}
\Gamma_{jk} := \ketbra{j}{k}.
\end{equation}
The orthocross MIC will be built from a set of $d^2$ rank-1 projectors
$\{\Pi_\alpha\}$, the first $d$ of which are given by
\begin{equation}
\Pi_\alpha = \Gamma_{\alpha\alpha}.
\end{equation}
Then, for $\alpha = d+1, \ldots, \frac{1}{2}d(d+1)$, we take all the
quantities of the form
\begin{equation}
\frac{1}{2}\left(\ket{j} + \ket{k}\right)
 \left(\bra{j} + \bra{k}\right)
= \frac{1}{2} (\Gamma_{jj} + \Gamma_{kk} + \Gamma_{jk} + \Gamma_{kj}),
\end{equation}
where $j < k$. We construct the rest of the $\{\Pi_\alpha\}$
similarly, by taking all quantities of the form
\begin{equation}
\frac{1}{2}\left(\ket{j} + i\ket{k}\right)
 \left(\bra{j} - i\bra{k}\right)
= \frac{1}{2} (\Gamma_{jj} + \Gamma_{kk} - i\Gamma_{jk} + i\Gamma_{kj}),
\end{equation}
where again the indices satisfy $j < k$. That is, the set
$\{\Pi_\alpha\}$ contains the projectors onto the original orthonormal
basis, as well as projectors built from the ``cross terms''.

The operators $\{\Pi_\alpha\}$ form a positive semidefinite operator basis which can be plugged into the procedure described above. Explicitly,
\begin{equation}
\Omega = \sum_{\alpha=1}^{d^2} \Pi_\alpha\;,
\end{equation}
and the orthocross MIC elements are given by
\begin{equation}
    E_\alpha := \Omega^{-1/2} \Pi_\alpha \Omega^{-1/2}\;.
\end{equation}

The operator $\Omega$ for the initial set of vectors has a comparatively simple matrix
representation: The elements along the diagonal are all equal to~$d$,
the elements above the diagonal are all equal to $\frac{1}{2}(1-i)$,
and the rest are $\frac{1}{2}(1+i)$, as required by $\Omega =
\Omega^\dag$. The matrix $\Omega$ is not quite a circulant matrix,
thanks to that change of sign, but it can be turned into one by
conjugating with a diagonal unitary matrix. Consequently, the
eigenvalues of~$\Omega$ can be found explicitly via discrete Fourier
transformation.  The result is that, for $m = 0,\ldots,d-1$,
\begin{equation}
\lambda_m = d + \frac{1}{2}\left(\cot \frac{\pi(4m+1)}{4d} - 1 \right).
\end{equation}
This mathematical result has a physical
implication~\cite{Fuchs:2002}.
\begin{theorem}
  The probability of any outcome $E_\alpha$ of an orthocross MIC,
  given any quantum state $\rho$, is bounded above by
  \begin{equation}
    P(E_\alpha) \leq \left[d - \frac{1}{2}\left(1 + \cot
      \frac{3\pi}{4d}\right)\right]^{-1} < 1.
  \end{equation}
\end{theorem}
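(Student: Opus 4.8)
The plan is to read the left-hand side as a maximal eigenvalue and then squeeze it using the known spectrum of $\Omega$. First I would use the fact that, for any positive semidefinite operator $M$, $\max_\rho \tr \rho M = \lambda_{\rm max}(M)$, the maximum over states being attained at a top eigenprojector of $M$. Writing the orthocross projector as $\Pi_\alpha = \ketbra{\pi_\alpha}{\pi_\alpha}$ with $\ket{\pi_\alpha}$ a unit vector, the effect $E_\alpha = \Omega^{-1/2}\Pi_\alpha\Omega^{-1/2} = \ketbra{w_\alpha}{w_\alpha}$, with $\ket{w_\alpha} := \Omega^{-1/2}\ket{\pi_\alpha}$, is rank one, so its single nonzero eigenvalue is $\braket{w_\alpha}{w_\alpha} = \bra{\pi_\alpha}\Omega^{-1}\ket{\pi_\alpha}$. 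Hence $P(E_\alpha) \le \bra{\pi_\alpha}\Omega^{-1}\ket{\pi_\alpha}$ for every state $\rho$, with equality when $\rho$ is the normalized projector onto $\ket{w_\alpha}$.

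Next I would bound this Rayleigh quotient uniformly in $\alpha$: since $\ket{\pi_\alpha}$ is a unit vector, $\bra{\pi_\alpha}\Omega^{-1}\ket{\pi_\alpha} \le \lambda_{\rm max}(\Omega^{-1}) = \lambda_{\rm min}(\Omega)^{-1}$. It then remains to identify $\lambda_{\rm min}(\Omega)$ from the eigenvalue formula $\lambda_m = d + \tfrac{1}{2}\!\left(\cot\tfrac{\pi(4m+1)}{4d} - 1\right)$, $m = 0,\dots,d-1$, derived above. As $m$ runs over this range, the angle $\tfrac{\pi(4m+1)}{4d}$ stays strictly inside $(0,\pi)$, on which $\cot$ is strictly decreasing; so the minimum occurs at $m = d-1$, where the angle is $\pi - \tfrac{3\pi}{4d}$ and $\cot\!\left(\pi - \tfrac{3\pi}{4d}\right) = -\cot\tfrac{3\pi}{4d}$. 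This gives $\lambda_{\rm min}(\Omega) = d - \tfrac{1}{2}\!\left(1 + \cot\tfrac{3\pi}{4d}\right)$, and combining it with the previous bound yields the asserted upper bound on $P(E_\alpha)$.

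For the strict inequality $\left[d - \tfrac{1}{2}(1 + \cot\tfrac{3\pi}{4d})\right]^{-1} < 1$ I would prove the stronger statement $\lambda_{\rm min}(\Omega) > 1$, i.e.\ $\cot\tfrac{3\pi}{4d} < 2d - 3$. For $d \ge 2$ the angle $\tfrac{3\pi}{4d}$ lies in $(0,\pi/2)$, so $\cot\tfrac{3\pi}{4d} < \tfrac{4d}{3\pi}$ (from $\tan x > x$ on $(0,\pi/2)$); and $\tfrac{4d}{3\pi} \le 2d - 3$ reduces to the linear inequality $9\pi \le (6\pi - 4)d$, which holds for all $d \ge 2$. (If one prefers to sidestep the tail estimate, the small cases $d = 2, 3$ can be checked by hand, using $\cot\tfrac{3\pi}{8} = \tan\tfrac{\pi}{8}$ and $\cot\tfrac{\pi}{4} = 1$.)

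The first two paragraphs are routine linear-algebra bookkeeping; the one genuinely non-mechanical step is the last, establishing $\lambda_{\rm min}(\Omega) > 1$ via the elementary bound $\cot x < 1/x$ on $(0,\pi/2)$ together with the linear comparison. That is where I would expect to spend the most care, modest though the difficulty is.
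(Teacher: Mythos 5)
Your proof is correct and follows essentially the same route as the paper's: bound $P(E_\alpha)$ by $\lambda_{\rm max}(E_\alpha)$, reduce that to $\lambda_{\rm max}(\Omega^{-1})=\lambda_{\rm min}(\Omega)^{-1}$, and read off the minimal eigenvalue from the cotangent formula. The only difference is that you explicitly carry out the steps the paper leaves implicit after ``the desired bound then follows'' --- identifying $m=d-1$ as the minimizer via monotonicity of $\cot$ on $(0,\pi)$ and verifying $\lambda_{\rm min}(\Omega)>1$ for the strict inequality --- and those details check out.
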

\begin{proof}
  The maximum of $\tr(\rho E_\alpha)$ over all $\rho$ is bounded above
  by the maximum of $\tr(\Pi E_\alpha)$, where $\Pi$ ranges over the
  rank-1 projectors. In turn, this is bounded above by the maximum
  eigenvalue of~$E_\alpha$. We then invoke that
  \begin{equation}
    \lambda_{\rm max}(E_\alpha)
    = \lambda_{\rm max}(\Omega^{-1/2}\Pi_\alpha \Omega^{-1/2})
    = \lambda_{\rm max}(\Pi_\alpha \Omega^{-1} \Pi_\alpha)
    \leq \lambda_{\rm max} (\Omega^{-1}).
  \end{equation}
  The desired bound then follows.
\end{proof}

Note that all the entries in the matrix $2\Omega$ are Gaussian integers,
that is, numbers whose real and imaginary parts are
integers. Consequently, all the coefficients in the characteristic
polynomial of~$2\Omega$ will be Gaussian integers, and so the eigenvalues
of~$2\Omega$ will be roots of a monic polynomial with Gaussian-integer
coefficients. This is an example of how, in the study of MICs, number
theory becomes relevant to physically meaningful quantities --- in
this case, a bound on the maximum probability of a
reference-measurement outcome. Number theory has also turned out to be
very important for SICs, in a much more sophisticated
way~\cite{Appleby:2013, Appleby:2016, Bengtsson:2016, Appleby:2017b,
  Kopp:2018}.

The following conjectures about orthocross MICs have been motivated by
numerical investigations. We suspect that their proofs will be relatively straightforward, but so far they have eluded us.
\begin{conjecture}
The entries in $G$ for orthocross MICs can become
arbitrarily small with increasing $d$, but no two elements of an
orthocross MIC can be exactly orthogonal.
\end{conjecture}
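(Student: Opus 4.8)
\medskip
\noindent\textit{Toward a proof.}
Write the orthocross effects as $E_\alpha=\Omega^{-1/2}\Pi_\alpha\Omega^{-1/2}$ with $\Pi_\alpha=\ketbra{\phi_\alpha}{\phi_\alpha}$, the $\ket{\phi_\alpha}$ being the orthocross unit vectors $\ket{j}$ and $\tfrac1{\sqrt2}(\ket{j}+c\,\ket{k})$ ($c\in\{1,i\}$, $j<k$). Since $\tr E_\alpha E_\beta=\abs{\bra{\phi_\alpha}\Omega^{-1}\ket{\phi_\beta}}^2$, the whole statement is a statement about $\Omega^{-1}$. As the text notes, conjugating $\Omega$ by $D=\operatorname{diag}(1,\mu,\ldots,\mu^{d-1})$, $\mu=e^{-i\pi/2d}$, produces a circulant; hence $[\Omega^{-1}]_{rs}=\mu^{\,r-s}c_{r-s}$ with $c_n=\tfrac1d\sum_{m=0}^{d-1}\omega^{mn}/\lambda_m$, $\omega=e^{2\pi i/d}$. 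The plan is to evaluate $c_n$ in closed form: writing $\cot\theta=i(e^{2i\theta}+1)/(e^{2i\theta}-1)$ and using $e^{2i\theta_m}=e^{i\pi/2d}\omega^m$ makes $c_n$ a sum over the $d$-th roots of unity of a rational function with a single simple pole, and a residue computation gives
\begin{equation}
  c_n=K'\,w_*^{\,n-1}\qquad(1\le n\le d-1),
\end{equation}
with $\abs{w_*}=1$, $K'\neq 0$, $\abs{K'}=\Theta(1/d^{2})$, and $t:=\mu w_*=e^{-i\pi/d}\,\tfrac{2d-1-i}{2d-1+i}$. Thus $\abs{c_n}$ is independent of $n\neq 0$, while positivity of the $\lambda_m^{-1}$ makes $c_0=\tfrac1d\sum_m\lambda_m^{-1}$ real, positive, of order $1/d$, and strictly larger in modulus than every $c_n$ with $n\neq 0$.

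With this the ``arbitrarily small'' half is immediate: when the supports of $\phi_\alpha$ and $\phi_\beta$ are disjoint, $[G]_{\alpha\beta}$ is at most $4\abs{K'}^2=O(1/d^{4})$, which tends to $0$ and is asymptotically below the diagonal $[G]_{\alpha\alpha}=e_\alpha^2=\Theta(1/d^{2})$; an orthocross MIC thus comes arbitrarily close, relative to its own scale, to possessing an orthogonal pair without ever reaching one.

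For ``no two orthogonal,'' $\bra{\phi_\alpha}\Omega^{-1}\ket{\phi_\beta}$ is a fixed $\{1,i,-i\}$-combination of at most four entries $\mu^{\,r-s}c_{r-s}$, and there are only finitely many cases, sorted by how the two supports overlap. If they overlap, the combination contains a diagonal term $c_0$, and since $\abs{c_n}<c_0$ for $n\neq 0$ --- with the slightly stronger $\abs{c_n}<c_0/\sqrt2$ in the lone sub-case of two same-support cross terms, which follows from a Kantorovich-type inequality because the ratio $\lambda_{\max}(\Omega)/\lambda_{\min}(\Omega)$ is bounded uniformly in $d$ (and $d=2$ is in any case covered by Corollary~\ref{d2ortho}) --- it cannot vanish. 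If the supports are disjoint, substituting $c_n=K'w_*^{\,n-1}$ and extracting phases reduces non-vanishing to $\sum_i\tau_i\,t^{\,e_i}\neq 0$ with $\tau_i\in\mathbb{Z}[i]$ a unit and $e_i\in\{1,\ldots,d-1\}$; by Niven's theorem $\arg t=-\tfrac{\pi}{d}-2\arctan\tfrac1{2d-1}\notin\pi\mathbb{Q}$ for $d\ge 2$ (as $\tfrac1{2d-1}\notin\{0,\pm1\}$), so no nonzero integer multiple of $\arg t$ lies in $\tfrac{\pi}{2}\mathbb{Z}$. That settles the one- and two-term relations outright. In the four-term case (disjoint cross--cross pairs), four unit-modulus numbers can sum to zero only by splitting into two antipodal pairs, which would force some $t^{\,e_i-e_j}\in\{\pm1,\pm i\}$ with $i\neq j$ --- impossible by Niven unless $e_i=e_j$. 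Tracking the four index differences shows two such coincidences occur only when $d$ is even and both supports are antipodal pairs $\{p,\,p+d/2\}$, and in that residual case a direct check of the four unit coefficients $\tau_i$ (which depend only on $c_\alpha,c_\beta$ and the cyclic order of the four indices) shows the antipodal-pairing pattern never arises.

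The crux is the closed-form evaluation of $c_n$; with that in hand both halves are short, the four-term disjoint case being the only place that calls for genuine bookkeeping.
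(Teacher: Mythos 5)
First, a point of order: the paper does not prove this statement. It appears explicitly as a conjecture, introduced by the authors' admission that its proof has so far eluded them, so there is no proof of record to compare yours against --- what you have written is an attack on an open problem. Your strategy --- reduce everything to $\Omega^{-1}$ via $[G]_{\alpha\beta}=\abs{\bra{\phi_\alpha}\Omega^{-1}\ket{\phi_\beta}}^2$, exploit the near-circulant structure to get a geometric closed form $c_n=K'w_*^{\,n-1}$ with $\abs{w_*}=1$, and then kill the disjoint-support cases with Niven's theorem applied to $\arg t=-\tfrac{\pi}{d}-2\arctan\tfrac{1}{2d-1}\notin\pi\mathbb{Q}$ --- is sensible and looks like the right skeleton. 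The residue computation behind the closed form is plausible (the rational function of $z=\omega^m$ obtained by substituting $e^{2i\theta_m}=e^{i\pi/2d}\omega^m$ into the cotangent really does have a single simple pole of unit modulus), and the Niven step is correct as far as it goes. But the writeup is not yet a proof, for at least three concrete reasons.

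First, your taxonomy of the overlapping-support cases is incomplete in one place: for two cross terms sharing exactly one index, the matrix element is $\tfrac12\bigl(c_0\cdot(\text{unit})+(\text{three off-diagonal terms})\bigr)$, so the bound you invoke, $\abs{c_n}<c_0$, does not preclude cancellation; you need $3\abs{K'}<c_0$, which presumably holds (it follows from your order estimates for large $d$, and small $d$ can be checked by hand) but is neither stated nor proved, and your ``lone sub-case'' requiring a stronger bound is not in fact lone. Second, the reduction of the disjoint case to $\sum_i\tau_i t^{e_i}$ with $\tau_i$ a unit in $\mathbb{Z}[i]$ glosses over the fact that $c_{r-s}$ is periodic mod $d$ while $\mu^{s-r}$ is not: terms with $r<s$ acquire an extra factor of $w_*^{\pm d}$ (equivalently they involve $\bar t=t^{-1}$ together with the fixed phase $\bar{K'}/K'$), so the relations to be excluded have the form $a\arg t+b\,\pi/d\in\tfrac{\pi}{2}\mathbb{Z}$ with $(a,b)$ not both zero, not merely $a\arg t\in\tfrac{\pi}{2}\mathbb{Z}$. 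This is still tractable --- $\arg t\notin\pi\mathbb{Q}$ forces $a=0$ and the remainder is a finite check --- but it must be done, and it changes which residual coincidences survive to the ``direct check'' you defer. Third, that final direct check (the antipodal-support, even-$d$ case), the Kantorovich-type bound $\abs{c_n}<c_0/\sqrt2$, and the asymptotics $\abs{K'}=\Theta(1/d^2)$ versus $c_0=\Theta(1/d)$ are all asserted rather than established, and each is load-bearing. None of this looks fatal, but until the closed form for $c_n$ is actually derived and the case analysis is completed with the corrected bounds, this is a promising outline rather than a proof, and the statement should remain labeled a conjecture.
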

\begin{conjecture}
For any orthocross MIC, the entries in $G^{-1}$ are integers or half-integers.
\end{conjecture}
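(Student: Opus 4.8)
The plan is to realize $G^{-1}$ as the Gram matrix of the dual basis of the orthocross MIC and then control its denominators prime by prime. Since the orthocross MIC is $E_\alpha=\Omega^{-1/2}\Pi_\alpha\Omega^{-1/2}$ with $\Omega=\sum_\alpha\Pi_\alpha$, a direct check of $\tr(E_\alpha\widetilde E_\beta)=\delta_{\alpha\beta}$ shows its dual basis is $\widetilde E_\alpha=\Omega^{1/2}\widetilde\Pi_\alpha\Omega^{1/2}$, where $\{\widetilde\Pi_\alpha\}$ is the Hilbert--Schmidt dual of the rank-one set $\{\Pi_\alpha\}$. As the Gram matrix of the dual basis is $G^{-1}$, this yields the clean identity $[G^{-1}]_{\alpha\beta}=\tr(\widetilde E_\alpha\widetilde E_\beta)=\tr(\widetilde\Pi_\alpha\,\Omega\,\widetilde\Pi_\beta\,\Omega)$, in which the one irrational ingredient $\Omega^{1/2}$ has disappeared.

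Next I would compute the $\widetilde\Pi_\alpha$ explicitly. Writing $\Gamma_{jk}=\ketbra{j}{k}$, the duals of the two ``cross'' families are $\widetilde\Pi^{+}_{jk}=\Gamma_{jk}+\Gamma_{kj}$ and $\widetilde\Pi^{-}_{jk}=i(\Gamma_{kj}-\Gamma_{jk})$ — matrices over $\mathbb Z[i]$ — while the dual of $\Gamma_{jj}$ is $\widetilde\Pi_{jj}=\Gamma_{jj}-\tfrac12\sum_{l\ne j}(\Gamma_{jl}+\Gamma_{lj})+\tfrac{i}{2}\big(\sum_{l>j}-\sum_{l<j}\big)(\Gamma_{jl}-\Gamma_{lj})$, all of whose nonzero off-diagonal entries (of modulus $1/\sqrt2$) sit in row and column $j$; equivalently, the Gram matrix $g_\Pi$ of the $\Pi_\alpha$ has an integer inverse. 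Thus every entry of every $\widetilde\Pi_\alpha$ lies in $\tfrac12\mathbb Z[i]$, and so do the entries of $\Omega$ (its diagonal entries are the integer $d$, its off-diagonal entries $\tfrac12(1\mp i)$). Hence $16\,[G^{-1}]_{\alpha\beta}=\tr\big((2\widetilde\Pi_\alpha)(2\Omega)(2\widetilde\Pi_\beta)(2\Omega)\big)$ is a trace of a product of four matrices over $\mathbb Z[i]$, so a Gaussian integer, and it is real because $G^{-1}$ is, hence an ordinary integer. This already shows $[G^{-1}]_{\alpha\beta}\in\tfrac1{16}\mathbb Z$.

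The remaining — and principal — task is to sharpen ``$16$'' to ``$2$'': one must show that $8$ divides the integer $16\,[G^{-1}]_{\alpha\beta}$. I would argue $(1+i)$-adically in $\mathbb Z[i]$, so that $(1+i)^2=2i$ and $2$ has valuation $2$. Every entry of $2\Omega$ is divisible by $1+i$ — the off-diagonal entries $1\mp i$ exactly once, the diagonal entries $2d$ at least twice — so $2\Omega=(1+i)W$ with $W$ over $\mathbb Z[i]$, all of whose entries have valuation $\ge 0$ and whose diagonal entries have valuation $\ge 1$; substituting gives $[G^{-1}]_{\alpha\beta}=\tfrac{i}{2}\,\tr(\widetilde\Pi_\alpha W\widetilde\Pi_\beta W)$. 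In the ``cross--cross'' block — the generic case, $d(d-1)$ of the $d^2$ dual elements — $\widetilde\Pi_\alpha,\widetilde\Pi_\beta,W$ are all over $\mathbb Z[i]$, and one checks, after grouping terms, that $\tr(\widetilde\Pi_\alpha W\widetilde\Pi_\beta W)$ is divisible by $2$ in $\mathbb Z[i]$, so $[G^{-1}]_{\alpha\beta}$ is an \emph{integer} there. In the ``diagonal--cross'' and ``diagonal--diagonal'' blocks the genuinely half-integer entries of $\widetilde\Pi_{jj}$ must be tracked; the potentially low-valuation contributions are controlled by the facts that those entries are confined to a single row/column, that they satisfy $(\widetilde\Pi_{jj})_{jl}\,\Omega_{lj}=-\tfrac12$ for every $l\ne j$ (so the relevant sub-sums reduce to $-\tfrac{d-1}{2}$ or its square), and that the remaining pieces carry enough powers of $1+i$ to bring the total denominator down to $2$.

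I expect this last step to be the real obstacle: the rough bound of Steps one and two is immediate, but extracting the extra factor of $8$ requires careful, block-by-block $2$-adic bookkeeping of the cancellations, the all-diagonal block being the worst case. A cleaner route, if it can be found, would be to produce the closed form of $G^{-1}$ outright — for $d=2$ it is the block matrix with $\tfrac92$ along the ``diagonal-dual'' diagonal, $8$ along the ``cross-dual'' diagonal, $1$ within each block off the diagonal, and $-\tfrac52$ between the two blocks — from which the claim is immediate; the rigid structure of $g_\Pi^{-1}$ and of $\Omega$ strongly suggests such a formula exists for all $d$, and pinning it down may well be the ``relatively straightforward'' argument that has so far eluded us.
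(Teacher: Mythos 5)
First, a point of order: the paper offers no proof of this statement. It is listed explicitly as a conjecture ``motivated by numerical investigations'' whose proof has ``so far eluded'' the authors, so there is no argument of theirs to compare yours against; the only question is whether your proposal settles the conjecture, and it does not. That said, the part you do carry out is correct and is genuine progress. The reduction $[G^{-1}]_{\alpha\beta}=\tr(\widetilde{\Pi}_\alpha\,\Omega\,\widetilde{\Pi}_\beta\,\Omega)$ is valid: the dual basis of $E_\alpha=\Omega^{-1/2}\Pi_\alpha\Omega^{-1/2}$ is indeed $\Omega^{1/2}\widetilde{\Pi}_\alpha\Omega^{1/2}$, the Gramian of a dual basis is $G^{-1}$, and eliminating the irrational $\Omega^{1/2}$ in favor of $\Omega$ is almost certainly the key structural move any eventual proof will need. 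Your explicit formulas for $\widetilde{\Pi}^{\pm}_{jk}$ and $\widetilde{\Pi}_{jj}$ check out against the defining relations $\tr(\Pi_\alpha\widetilde{\Pi}_\beta)=\delta_{\alpha\beta}$, and in $d=2$ your claimed closed form for $G^{-1}$ (entries $9/2$, $8$, $1$, $-5/2$) is exactly right. The conclusion that $16\,[G^{-1}]_{\alpha\beta}$ is a Gaussian integer, hence a rational integer by realness, follows.

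The gap is that the statement to be proved is membership in $\tfrac12\mathbb{Z}$, not $\tfrac1{16}\mathbb{Z}$, and the step bridging the two is precisely the one you do not execute. In the cross--cross block you write that ``one checks, after grouping terms'' that $\tr(\widetilde{\Pi}_\alpha W\widetilde{\Pi}_\beta W)$ is even in $\mathbb{Z}[i]$; in the diagonal--cross and diagonal--diagonal blocks you assert that the low-valuation contributions ``are controlled by'' certain identities without exhibiting the cancellation. Those cancellations are exactly where the conjecture lives --- as you yourself concede, this last step ``is the real obstacle'' --- so what you have is a correct and useful reduction of the problem, not a proof. To close it you would need either to carry out the $(1+i)$-adic bookkeeping block by block for general $d$ (taking care that the diagonal entries $2d$ of $2\Omega$ have $(1+i)$-valuation $2+2v_2(d)$, so the budget must not tacitly depend on the parity of $d$), or, more attractively, to promote your $d=2$ closed form to a general-$d$ formula for $G^{-1}$; the rigid structure of the $\widetilde{\Pi}_\alpha$ and of $\Omega$ makes the latter look feasible, and it would settle the conjecture outright rather than entry by entry.
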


\subsection{Group Covariant MICs}\label{sec:groupcovariant}
The method discussed in the previous subsection allows us to make fully arbitrary MICs, but it is also possible to construct MICs with much more built-in structure. The MICs which have received the most attention in the literature to date are the \textit{group covariant} MICs --- those whose elements are the orbit of a group of unitary matrices acting by conjugation. For additional discussion of group covariant IC POVMs, see \cite{DAriano:2004}. 

The Gram matrix of a group covariant MIC is very simple. Suppose $\{E_i\}$ is a group covariant MIC, so $E_i=U_iE_0U_i^\dag$ where $E_0$ is the first element of the MIC and the index $i$ gives the element of the unitary representation of the group sending this element to the $i$th element. Then all distinct elements of the Gram matrix are present in the first row because
\begin{equation}
    [G]_{ij}=\tr E_iE_j=\tr U_iE_0U_i^\dag U_jE_0U_j^\dag=\tr E_0U_kE_0U_k^\dag\;,
\end{equation}
for some $k$ determined by the group. Another way to say this is that every row of the Gram matrix of a group covariant MIC is some permutation of the first row. 

Note that any group covariant MIC is unbiased because conjugation by a unitary cannot change the trace of a matrix, but the converse is not true; the simplest example of an unbiased MIC which is not group covariant which we have encountered is the one given in Example \ref{7orthopairs}. 

The most studied and likely most important group covariant MICs are the \textit{Weyl--Heisenberg} MICs (WH MICs), which are covariant with respect to the Weyl--Heisenberg group. Part of the intution for the importance of this group comes from the fact that its generators form finite dimensional analogs of the position and momentum operators. Perhaps more telling, every known SIC is group covariant and in all cases but one that group is the
Weyl--Heisenberg group~\cite{Fuchs:2017a}. 

The Weyl--Heisenberg group is constructed as follows. Let $\{\ket{j}: j = 0,\ldots,d-1\}$ be an
orthonormal basis, and define $\omega = e^{2\pi i/d}$. Then the
operator
\begin{equation}
X\ket{j} = \ket{j+1},
\end{equation}
where addition is interpreted modulo $d$, effects a cyclic shift of
the basis vectors. The Fourier transform of the $X$ operator is
\begin{equation}
Z \ket{j} = \omega^j \ket{j},
\end{equation}
and together these operators satisfy the Weyl commutation relation
\begin{equation}
ZX = \omega XZ.
\end{equation}
The Weyl--Heisenberg displacement operators are
\begin{equation}
D_{k,l} := (-e^{\pi i/d})^{kl} X^k Z^l,
\end{equation}
and together they satisfy the conditions
\begin{equation}
D_{k,l}^\dag = D_{-k,-l},\quad\ D_{k,l}D_{m,n} = (-e^{\pi i/d})^{lm-kn}
D_{k+m,l+n}.
\end{equation}
Each $D_{k,l}$ is unitary and a $d^{\rm th}$ root of the identity. The
Weyl--Heisenberg group is the set of all operators $(-e^{\pi i/d})^m
D_{k,l}$ for arbitrary integers $m$, and it is projectively equivalent
to $\mathbb{Z}_d \times \mathbb{Z}_d$. Then, for any density matrix $\rho$ such that
\begin{equation}
    \tr \!\!\left(D_{k,l}^\dag\rho\right)\neq 0,\quad \forall (k,l)\in \mathbb{Z}_d\times\mathbb{Z}_d\;,
\end{equation}
the set
\begin{equation}
    E_{k,l}:=\frac{1}{d}D_{k,l}\rho D_{k,l}^\dag
\end{equation}
forms a WH MIC. 
\subsection{Equiangular MICs}\label{equiangularMICs}
An \textit{equiangular}\footnote{Appleby and Graydon introduced the term \emph{SIM} for an equiangular MIC of arbitrary rank; a rank-1 SIM is a
SIC~\cite{Graydon:2016, Graydon:2016a}. 
}
 MIC is one for which the Gram matrix takes the form
\begin{equation}
    [G]_{ij}=\alpha\delta_{ij}+\zeta\;.
\end{equation}
Equiangular MICs are unbiased (see Corollary 3 in \cite{Appleby:2015}) and, because $\sum_{ij}[G]_{ij}=d$, it is easy to see that $\alpha=1/d-d^2\zeta$ and that   
\begin{equation}
    \frac{1}{d^2(d+1)}\leq\zeta<\frac{1}{d^3}\;.
\end{equation}
SICs are rank-1 equiangular MICs for which $\zeta$ achieves the minimum allowed value. The upper bound $\zeta$ value is approached by MICs arbitrarily close to $E_i=\frac{1}{d^2}I$ for all $i$.

Armed with a SIC in a given dimension, one can construct an equiangular MIC for any allowed $\zeta$ value by mixing in some of the identity to each element: 
\begin{equation}\label{remixedSIC}
    E_i=\frac{\beta}{d}\Pi_i+\frac{1-\beta}{d^2}I\;,\quad \frac{-1}{d-1}\leq(\beta\neq0)\leq1\;.
\end{equation}
Even if a SIC is not known, it is generally much easier to construct equiangular MICs when the elements are not required to be rank-1. One way to do this which always works for any $\beta\leq\frac{1}{d+1}$ is by replacing the SIC projector in equation \eqref{remixedSIC} with a quasi-SIC.\footnote{See Appendix A of \cite{DeBrota:2019} for the definition and a construction of a quasi-SIC.} Depending on the quasi-SIC, higher values of $\beta$ may also work.

Another construction in odd dimensions are the \emph{Appleby
MICs}~\cite{Appleby:2007a}. The Appleby MICs are WH covariant and constructed as follows. Let the operator $B$ be constructed as
\begin{equation}
    B := \frac{1}{\sqrt{d+1}} \sum_{\{k,l\}\neq\{0,0\}} D_{k,l},
\end{equation}
and define $B_{k,l}$ to be its conjugate under a Weyl--Heisenberg
displacement operator:
\begin{equation}
B_{k,l} := D_{k,l} B D_{k,l}^\dag.
\end{equation}
The elements of the Appleby MIC have rank $(d+1)/2$, and are defined by
\begin{equation}
E_{k,l} := \frac{1}{d^2} \left(I + \frac{1}{\sqrt{d+1}} B_{k,l}\right).
\end{equation}

For any quantum state $\rho$, the quantities
\begin{equation}
W_{k,l} := (d+1) \tr(E_{k,l} \rho) - \frac{1}{d}
\end{equation}
are \emph{quasiprobabilities}: They can be negative, but the sum over
all of them is unity. The quasiprobability function $\{W_{k,l}\}$ is
known as the \emph{Wigner function} of the quantum state $\rho$. This is an example of a relation we study much more generally in a companion paper~\cite{DeBrota:2019b}. In particular, we speculate there that the Appleby MIC and other MICs may inherit special significance from a related a Wigner function.

\subsection{Tensorhedron MICs}
So far, we have not imposed any additional structure upon our Hilbert
space. However, in practical applications, one might have additional
structure in mind, such as a preferred factorization into a tensor
product of smaller Hilbert spaces. For example, a register in a
quantum computer might be a set of $N$ physically separate qubits,
yielding a joint Hilbert space of dimension $d = 2^N$. In such a case,
a natural course of action is to construct a MIC for the joint system
by taking the tensor product of multiple copies of a MIC defined on
the component system:
\begin{equation}
  E_{j_1,j_2,\ldots,j_N} := E_{j_1} \otimes E_{j_2} \otimes \cdots \otimes E_{j_N}.
\end{equation}
Since a collection of $N$ qubits is a natural type of system to
consider for quantum computation, we define the $N$-qubit
\emph{tensorhedron MIC} to be the tensor product of $N$ individual
qubit SICs. These have appeared in the setting of quantum cryptography \cite{Englert:2005} as well as quantum tomography \cite{Zhu:2010} where it was proven that tensor product SICs (and thus $N$-qubit tensorhedron MICs) are optimal among product measurementss in the same way that SICs are across all measurements. 

\begin{theorem}
  The Gram matrix of an $N$-qubit tensorhedron MIC is the tensor
  product of $N$ copies of the Gram matrix for the qubit SIC out of
  which the tensorhedron is constructed.
\end{theorem}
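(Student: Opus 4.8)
The plan is to compute a generic entry of the tensorhedron MIC's Gram matrix directly and recognize the result as an entry of the $N$-fold Kronecker product of the qubit-SIC Gram matrix. First I would recall the two elementary facts about tensor products of operators: multiplicativity, $(A_1\otimes\cdots\otimes A_N)(B_1\otimes\cdots\otimes B_N) = (A_1B_1)\otimes\cdots\otimes(A_NB_N)$, and factorization of the trace, $\tr(C_1\otimes\cdots\otimes C_N) = \prod_m \tr C_m$. Applying these to the tensorhedron elements $E_{j_1,\ldots,j_N} = E_{j_1}\otimes\cdots\otimes E_{j_N}$, where each $E_{j_m}$ is an element of the fixed qubit SIC, yields
\[
\tr(E_{j_1,\ldots,j_N}\,E_{k_1,\ldots,k_N}) = \prod_{m=1}^N \tr(E_{j_m}E_{k_m}) = \prod_{m=1}^N [G_{\rm SIC}]_{j_m k_m},
\]
with $G_{\rm SIC}$ the $4\times 4$ qubit-SIC Gram matrix.

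Next I would observe that, under the standard lexicographic multi-index labelling of the $d^2 = 4^N$ outcomes, the right-hand side is by definition the $\big((j_1,\ldots,j_N),(k_1,\ldots,k_N)\big)$ entry of the $N$-fold Kronecker product $G_{\rm SIC}\otimes\cdots\otimes G_{\rm SIC}$. Hence the Gram matrix of the tensorhedron MIC equals $G_{\rm SIC}^{\otimes N}$ entrywise, which is the assertion. I would also remark at the outset why the tensorhedron is a MIC at all, so that the theorem is about a nonempty class of objects: tensoring POVMs gives a POVM, and tensoring linearly independent operator sets gives a linearly independent set whose cardinality is $\prod_m d_m^2 = (\prod_m d_m)^2 = d^2$, so the construction indeed lands among minimal IC POVMs.

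The only point requiring genuine care — and the closest thing to an obstacle — is bookkeeping of index conventions: one must fix a single ordering of the composite outcomes and check that it is simultaneously the ordering that makes $\tr(A\otimes B)=\tr A\,\tr B$ hold at the level of matrices and the ordering that defines the Kronecker product, so that the identification in the last step is a literal equality and not merely a similarity by a permutation matrix. (Under a different choice of outcome ordering the statement still holds, up to the corresponding simultaneous permutation of rows and columns, which can be noted in passing.) Everything else is the routine multilinear algebra above; no deeper input — such as the explicit form of $G_{\rm SIC}$ or the existence theory for SICs — is needed.
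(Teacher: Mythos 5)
Your proposal is correct and follows essentially the same route as the paper's proof: compute a generic Gram-matrix entry, use multiplicativity of the tensor product and factorization of the trace to get $\prod_m [G_{\rm SIC}]_{j_m k_m}$, and recognize this as an entry of the Kronecker product. The paper writes out only the $N=2$ case and asserts the extension, whereas you handle general $N$ directly and add the (welcome but inessential) remarks on index conventions and on why the tensorhedron is a MIC in the first place.
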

\begin{proof}
  Consider the two-qubit tensorhedron MIC, whose elements are given by
  \begin{equation}
    E_{d(j-1)+j'} := \frac{1}{4}\Pi_j \otimes \Pi_{j'},
  \end{equation}
  with $\{\Pi_j\}$ being a qubit SIC. The Gram matrix for the
  tensorhedron MIC has entries
  \begin{equation}
    [G]_{d(j-1)+j',d(k-1)+k'} = \frac{1}{16}
    \tr[(\Pi_j\otimes\Pi_{j'}) (\Pi_k\otimes\Pi_{k'})].
  \end{equation}
  We can group together the projectors that act on the same subspace:
  \begin{equation}
    [G]_{d(j-1)+j',d(k-1)+k'} = \frac{1}{16}
    \tr(\Pi_j \Pi_k \otimes \Pi_{j'}\Pi_{k'}).
  \end{equation}
  Now, we distribute the trace over the tensor product, obtaining 
  \begin{equation}
    [G]_{d(j-1)+j',d(k-1)+k'} = \frac{1}{16}
    \frac{2\delta_{jk}+1}{3} \frac{2\delta_{j'k'} + 1}{3}
    = [G_{\rm SIC}]_{jk} [G_{\rm SIC}]_{j'k'},
  \end{equation}
  which is just the definition of the tensor product:
  \begin{equation}
    G = G_{\rm SIC} \otimes G_{\rm SIC}.
  \end{equation}
  This extends in the same fashion to more qubits.
\end{proof}
\begin{corollary}
  The spectrum of the Gram matrix for an $N$-qubit tensorhedron MIC
  contains only the values
  \begin{equation}
    \lambda = \frac{1}{2^N}\frac{1}{3^m},\ m = 0,\ldots,N.
  \end{equation}
\end{corollary}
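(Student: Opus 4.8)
The plan is to lean entirely on the preceding theorem, which identifies the tensorhedron Gram matrix as $G = G_{\rm SIC}^{\otimes N}$, together with the elementary fact that the eigenvalues of a Kronecker product are the pairwise products of the eigenvalues of the factors. So the problem reduces to knowing the spectrum of a single qubit SIC Gram matrix and then forming all $N$-fold products of those eigenvalues.

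First I would record the base case. Specializing the SIC spectrum formula $\lambda(G_{\rm SIC}) = (1/d,\, 1/(d(d+1)),\, \ldots,\, 1/(d(d+1)))$ to $d = 2$ gives the eigenvalue $1/2$ with multiplicity $1$ and the eigenvalue $1/6$ with multiplicity $3$. (Equivalently, one reads this off $[G_{\rm SIC}]_{ij} = \tfrac14 \tfrac{2\delta_{ij}+1}{3}$ directly: on the $4$-dimensional space it is $\tfrac{1}{12}(I + J)$ with $J$ the all-ones matrix, whose eigenvalues are $4$ and $0$.) Next I would invoke that for any matrices $A$ and $B$ the spectrum of $A \otimes B$ is $\{\lambda_i \mu_j : \lambda_i \in \mathrm{spec}(A),\ \mu_j \in \mathrm{spec}(B)\}$ with multiplicities multiplying; iterating this, every eigenvalue of $G = G_{\rm SIC}^{\otimes N}$ is a product of $N$ factors, each equal to $1/2$ or $1/6$. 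If exactly $m$ of these factors equal $1/6$, the product is $(1/2)^{N-m}(1/6)^m = 1/(2^N 3^m)$, and $m$ ranges over $0, \ldots, N$; conversely each such value occurs, so the spectrum is exactly $\{2^{-N} 3^{-m} : 0 \le m \le N\}$. As a consistency check one can note that $2^{-N} 3^{-m}$ then occurs with multiplicity $\binom{N}{m} 3^m$, and $\sum_{m=0}^N \binom{N}{m} 3^m = 4^N$ matches the dimension of the Gram matrix.

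There is essentially no hard step here: the entire content of the corollary is carried by the tensor-product structure of $G$ from the preceding theorem, and the only point requiring care is getting the single qubit spectrum right, which is immediate from the general SIC spectrum formula at $d = 2$.
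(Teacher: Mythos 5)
Your proposal is correct and follows exactly the paper's route: invoke the preceding theorem to get $G = G_{\rm SIC}^{\otimes N}$, specialize the SIC spectrum formula to $d=2$ to obtain eigenvalues $1/2$ and $1/6$, and apply the fact that the spectrum of a tensor product is the set of pairwise products of the factors' spectra. The multiplicity count $\binom{N}{m}3^m$ is a nice extra consistency check beyond what the paper states, but the core argument is the same.
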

\begin{proof}
  This follows readily from the linear-algebra fact that the spectrum
  of a tensor product is the set of products $\{\lambda_i \mu_j\}$,
  where $\{\lambda_i\}$ and $\{\mu_j\}$ are the spectra of the
  factors.
\end{proof}
We can also deduce properties of MICs made by taking tensor products
of MICs that have orthogonal elements.  Let $\{E_j\}$ be a
$d$-dimensional MIC with Gram matrix $G$, and suppose that exactly $N$
elements of $G$ are equal to zero.  The tensor products $\{E_j \otimes
E_{j'}\}$ construct a $d^2$-dimensional MIC, the entries in whose Gram
matrix have the form $[G]_{jk} [G]_{j'k'}$, as above.  This product
will equal zero when either factor does, meaning that the Gram matrix
of the tensor-product MIC will contain $2d^4 N - N^2$ zero-valued
entries.  It seems plausible that in prime dimensions, where
tensor-product MICs cannot exist, the possible number of zeros is more
tightly bounded, but this remains unexplored territory.

\section{SICs are Minimally Nonclassical Reference Measurements}
\label{sec:optimal}
What might it mean for a MIC to be the best among all MICs? Naturally, it depends on what qualities are valued in light of which one MIC may be superior to another. As mentioned in the introduction, for a large number of metrics, SICs are optimal. The authors of this paper particularly value the capacity of MICs to index probabilistic representations of the Born Rule. For this use, the best MIC is the one which provides the most useful probabilistic representation, adopting some
quantitative ideal that a representation should approach. One codification of such an ideal is as follows. In essence, we want to find a MIC that
furnishes a probabilistic representation of quantum theory which looks
as close to classical probability as is mathematically possible. The
residuum that remains --- the unavoidable discrepancy that even the
most clever choice of MIC cannot eliminate --- is a signal of what is
truly \emph{quantum} about quantum mechanics. 

In a recent paper it was shown that SICs are strongly optimal for this project~\cite{DeBrota:2018}. To see why, consider the following scenario. An agent has a physical system
of interest, and she plans to carry out either one of two different,
mutually exclusive procedures on it.  In the first
procedure, she will drop the system directly into a measuring apparatus
and thereby obtain an outcome.  In the second procedure, she will
cascade her measurements, sending the system through a reference
measurement and then, in the next stage, feeding it into the device
from the first procedure.  Probability theory unadorned by physical assumptions
provides no constraints binding her expectations
for these two different courses of action. Let $P$ denote her probability assignments for the
consequences of following the two-step procedure and $Q$ those for the
single-step procedure.  Then, writing $\{H_i\}$ for the possible
outcomes of the reference measurement and $\{D_j\}$ for those of the
other,
\begin{equation}
  P(D_j) = \sum_i P(H_i) P(D_j|H_i).
\end{equation}
This equation is a consequence
of Dutch-book coherence~\cite{Fuchs:2013, Diaconis:2017} known
as the Law of Total Probability (LTP). But the claim that
\begin{equation}
  Q(D_j) = P(D_j)
\end{equation}
is an assertion of \emph{physics,} not entailed by the rules of probability theory alone. This assertion codifies in probabilistic language
the classical ideal that a reference measurement simply reads off the
system's ``physical condition'' or ``ontic state''.

We know this classical ideal is not met in quantum theory, that is, $Q(D_j)\neq P(D_j)$. Instead, as detailed in reference \cite{DeBrota:2018}, $Q(D_j)$ is related to $P(H_i)$ and $P(D_j|H_i)$ in a different way. To write the necessary equations compactly, we introduce a vector notation where the LTP takes the form
\begin{equation}
    P(D) = P(D|H)P(H)\;.
\end{equation}
To set up the quantum version of the above scenario, let $\{H_i\}$ be a MIC and $\{D_j\}$ be an arbitrary POVM. Furthermore, let $\{\sigma_i\}$ denote a set of post-measurement states for the reference measurement; that is, if the agent experiences outcome $H_i$, her new state for the system will be $\sigma_i$. In this notation, the Born Rule becomes
\begin{equation}\label{ltpanalog}
  Q(D) = P(D|H)\Phi P(H)\;, \hbox{ with }
  [\Phi^{-1}]_{ij} := \tr H_i \sigma_j\;.
\end{equation}
The matrix $\Phi$ depends upon the MIC and the post-measurement states, but it is always a column
quasistochastic matrix, meaning its columns sum to one but may contain
negative elements~\cite{DeBrota:2018}. In fact, $\Phi$ \emph{must}
contain negative entries; this follows from basic structural
properties of quantum theory~\cite{Ferrie:2011}. Now, the classical intution we mentioned above would be expressed by $\Phi=I$. However, no choice of MIC and set of post-measurement states can achieve this. The MICs and post-measurement sets which give a $\Phi$ matrix closest to the identity therefore supply the ideal representation we seek.

Theorem 1 in reference \cite{DeBrota:2018} proves that the distance between $\Phi$ and the identity with respect to any unitarily invariant norm is minimized when both the MIC and the post-measurement states are proportional to a SIC. Unitarily invariant norms include the Frobenius norm, the trace norm, the operator norm, and all
the other Schatten $p$-norms, as well as the Ky Fan $k$-norms. Although this theorem was proven for foundational reasons, a special case of the result turns out to
answer in the affirmative a conjecture regarding a practical matter of
quantum computation~\cite[\S VII.A]{Veitia:2018}.

What ended up being important for the optimality proof in \cite{DeBrota:2018} was that both the MIC and the post-measurement states be proportional to SICs, but not necessarily that they be proportional to the \textit{same} SIC. Although the measures considered there were not sensitive to this distinction, the same SIC case has obvious conceptual and mathematical advantages. From a conceptual standpoint, when the post-measurement states are simply the projectors $\Pi_i$ corresponding to the SIC outcome just
obtained, our ``throw away and reprepare'' process is equivalent to L\"uders rule updating, which there are independent reasons for preferring~\cite{Barnum:2002}. When the post-measurement states are the same SIC as the reference measurement, $\Phi$ takes the uniquely simple form
\begin{equation}
    \Phi_{\rm SIC}=(d+1)I-\frac{1}{d}J\;,
\end{equation}
where $J$ is the Hadamard identity, that is, the matrix of all 1s. Inserted into \eqref{ltpanalog} and written in index form, this produces the expression
\begin{equation}
    Q(D_j)=\sum_i\left[(d+1)P(H_i)-\frac{1}{d}\right]P(D_j|H_i)\;,
    \label{urgleichung}
\end{equation}
having the advantage that for each conditional probability given an outcome $H_i$, only the $i^{\rm th}$ reference probability figures into that term in the sum. This is not so for two arbitrarily chosen SICs, and, as such, that case would result in a messier probabilistic representation.

This path is not the only one from which to arrive at the conclusion that SICs furnish a minimally nonclassical reference measurement. Recall the close association of classicality and orthogonality noted in section~\ref{sec:basics}. From this standpoint, one might claim that most ``classical'' or least ``quantum'' reference measurement is one that is closest to an orthogonal measurement. 

While we know from Corollary \ref{noortho} that a MIC cannot be an orthogonal basis, how close can one get? One way to quantify this closeness is via an operator distance between the Gramians of an orthogonal basis and a MIC. From the proof of Corollary \ref{noortho}, we know that if a MIC could be orthogonal its Gram matrix would be $[G]_{ij}=e_i\delta_{ij}$. With no further restrictions, we can get arbitrarily close to this ideal, for instance, with a MIC constructed as follows.
Consider a set of $d^2$ matrices $\{A_i\}$ where the first $d$ of them are the eigenprojectors of a Hermitian matrix and the remaining $d^2-d$ are the zero matrix. Then, for an arbitrary\footnote{As long as a linear dependence does not develop.} MIC $\{B_j\}$, we may form a new MIC, indexed by a real number $0<t<1$,
\begin{equation}
    E^t_i:=t A_i+(1-t)B_i\;.
\end{equation}
One may see that the Gram matrix of $\{E^t_i\}$ approaches the orthogonal Gram matrix in the limit $t\rightarrow 1$.

But at such an extreme, the usefulness of a MIC is completely destroyed. In the above scenario when $t$ is close to $1$, the informational completeness is all but gone, as one has to reckon with vanishingly small probabilities when dealing with a MIC close to the limit point. Such a MIC fails miserably at being anything like a reasonable reference measurement. Although formally capable of being a reference measurement, a biased MIC deprives us of an even-handed treatment of indifference; the
garbage state, which is poised in Hilbert space to capture pure state preparation indifference, would be represented by a non-flat probability distribution. Worse, for any sufficiently biased MIC, i.e., one with any weight less than $1/d^2$, the flat probability distribution is not reached by any density matrix. Consequently, what we're really after is an unbiased reference measurement which is as close to an orthogonal measurement as possible. With this additional constraint,
the following theorem demonstrates that SICs are the optimal choice.

\begin{theorem}\label{closetoortho}
    The closest an unbaised MIC can be to an orthogonal basis, as
    measured by the Frobenius distance between their Gramians, is when
    the MIC is a SIC.\footnote{An earlier paper by one of us (CAF) and
      a collaborator~\cite{Fuchs:2013} made the claim that the
      condition of being unbiased could be derived by minimizing the
      squared Frobenius distance; this is erroneous as the unequally
      weighted example with $\{E^t_i\}$ shows. For the purposes of
      that earlier paper, it is sufficient to impose by hand the
      requirement that the MIC be unbiased, since this is a naturally
      desirable property for a standard reference measurement. Having
      made this extra proviso, the conceptual conclusions of that work
      are unchanged.}
\end{theorem}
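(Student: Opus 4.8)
The plan is to phrase everything in terms of the Gram matrix $G$, $[G]_{ij}=\tr E_iE_j$. For an unbiased MIC all weights equal $e_i=1/d$, so the ``orthogonal ideal'' $[G]_{ij}=e_i\delta_{ij}$ from the discussion above is the fixed $d^2\times d^2$ matrix $G_0$ with $[G_0]_{ij}=\tfrac1d\delta_{ij}$, and the quantity to minimize over unbiased MICs is
\begin{equation}
  \|G-G_0\|_F^2=\sum_i\Bigl(\tr E_i^2-\tfrac1d\Bigr)^2+\sum_{i\neq j}(\tr E_iE_j)^2 .
\end{equation}
Before bounding this I would record two facts valid for every unbiased MIC. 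Summing the Born-rule normalization, $\sum_{ij}[G]_{ij}=\tr(\sum_iE_i)^2=\tr I=d$, hence $\sum_{i\neq j}[G]_{ij}=d-\sum_i\tr E_i^2$. And writing $E_i=\tfrac1d\rho_i$ with $\rho_i$ a state, $[G]_{ii}=\tr E_i^2=\tfrac1{d^2}\tr\rho_i^2\le\tfrac1{d^2}$, with equality iff $\rho_i$ is rank one; thus the ``impurity'' $u:=1-\sum_i\tr E_i^2$ satisfies $u\ge0$, with $u=0$ exactly for rank-$1$ MICs.

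Next I would bound the two sums from below. For the diagonal, $\tr E_i^2\le1/d^2$ gives the termwise estimate $(\tfrac1d-\tr E_i^2)^2\ge(\tfrac{d-1}{d^2})^2+\tfrac{2(d-1)}{d^2}(\tfrac1{d^2}-\tr E_i^2)$, so the diagonal part is at least $\tfrac{(d-1)^2}{d^2}+\tfrac{2(d-1)}{d^2}u$. For the off-diagonal, the Cauchy--Schwarz (power-mean) inequality on the $d^2(d^2-1)$ real entries $\{[G]_{ij}\}_{i\neq j}$, together with $\sum_{i\neq j}[G]_{ij}=d-1+u$, gives at least $\tfrac{(d-1+u)^2}{d^2(d^2-1)}\ge\tfrac{(d-1)^2+2(d-1)u}{d^2(d^2-1)}$. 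Adding and simplifying,
\begin{equation}
  \|G-G_0\|_F^2\ \ge\ \frac{(d-1)+2u}{d+1}\ \ge\ \frac{d-1}{d+1}.
\end{equation}
Equality forces $u=0$ (a rank-$1$ MIC) and forces the Cauchy--Schwarz step to be tight (all off-diagonal entries of $G$ equal), i.e.\ a rank-$1$ equiangular MIC; summing the entries then pins the common off-diagonal value to $\tfrac1{d^2(d+1)}$, so $G=G_{\rm SIC}$ and the MIC is a SIC. Conversely, substituting $G_{\rm SIC}$ into the first display returns exactly $\tfrac{d-1}{d+1}$, so the bound is attained, and only, by SICs.

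The arithmetic is elementary; the one point that needs care is the equality analysis, specifically ruling out a trade-off between the two relaxations. This is exactly what the single parameter $u$ makes transparent: allowing any $\rho_i$ to become mixed increases $u$ and thereby \emph{worsens both} lower bounds at once, so the minimum sits on the rank-$1$ face, where the only surviving freedom is the off-diagonal spread governed by Cauchy--Schwarz. (As always this presumes a SIC exists in dimension $d$; in dimensions where none is known, $\tfrac{d-1}{d+1}$ is an infimum rather than a minimum.)
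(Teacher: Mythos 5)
Your proof is correct and follows essentially the same route as the paper's: both decompose the squared Frobenius distance into its diagonal and off-diagonal sums, lower-bound each using $\sum_{ij}[G]_{ij}=d$ and $\tr E_i^2\le 1/d^2$, arrive at the same bound $\tfrac{d-1}{d+1}$, and identify the equality case as a rank-1 equiangular MIC, i.e.\ a SIC. The only cosmetic difference is that you bound the diagonal sum by a tangent-line (convexity) estimate and carry the impurity $u$ explicitly to make the monotonicity of both bounds in $u$ transparent, whereas the paper applies Cauchy--Schwarz to both sums and invokes $\sum_i\tr E_i^2\le 1$ only in the final step; the equality analyses coincide.
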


\begin{proof}
    We lower bound the square of the Frobenius distance:
    \begin{equation}
        \begin{split}
        \sum_{ij}\left(\frac{1}{d}\delta_{ij}-\tr E_iE_j\right)^{\!2}&=\sum_i\left(\frac{1}{d}-\tr E_i^2\right)^{\!2}+\sum_{i\neq j}(\tr E_iE_j)^2\\
        &\geq \frac{1}{d^2}\left(\sum_i\left(\frac{1}{d}-\tr E_i^2\right)\right)^{\!2}+\frac{1}{d^4-d^2}\left(\sum_{i\neq j}\tr E_iE_j\right)^{\!2}\\
        &=\frac{1}{d^2}\left(d-\sum_i\tr E_i^2\right)^{\!2}+\frac{1}{d^4-d^2}\left(d-\sum_i\tr E_i^2\right)^{\!2}\\
        &=\frac{1}{d^2-1}\left(d-\sum_i\tr E_i^2\right)^{\!2}\geq\frac{(d-1)^2}{d^2-1}=\frac{d-1}{d+1}\;.
    \end{split}
    \end{equation}
    The first inequality follows from two invocations of the Cauchy--Schwarz inequality and achieves equality iff $\tr E_i^2$ and $\tr E_iE_j$, for $i\neq j$, are constants, that is, iff the MIC is an equiangular MIC. The third line is easy to derive from the fact that for any MIC, $\sum_{ij}[G]_{ij}=d$. The final inequality comes from noting that $\sum_i\tr E_i^2=\frac{1}{d^2}\sum_i\tr\rho_i^2\leq1$ with equality iff the MIC is rank-1. Thus the lower bound is
    saturated iff the equal weight MIC is rank-1 and equiangular, that is, iff it is a SIC.
\end{proof}
Theorem \ref{closetoortho} concerned the Gramian of a MIC. We can, in fact, show a stronger result on the inverse of the Gram matrix. 
\begin{theorem}\label{U-norm}
    Let $G$ be the Gram matrix of an unbiased MIC, and let
    $\norm{\cdot}$ be any unitarily invariant norm (i.e., any norm
    where $\norm{A} = \norm{UAV}$ for arbitrary unitaries $U$ and
    $V$). Then
    \begin{equation}
        \left\|{I-\frac{1}{d}G^{-1}}\right\|\geq \left\|{I - \frac{1}{d}G_{\rm SIC}^{-1}}\right\|\;,
    \end{equation}
    with equality if and only if the MIC is a SIC.
\end{theorem}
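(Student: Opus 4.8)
The plan is to pass from the matrix statement to a statement about the eigenvalues of $G$ and then invoke the characterization of unitarily invariant norms as symmetric gauge functions of the singular values, monotone under weak majorization. First I would record the elementary facts. A MIC has linearly independent elements, so $G$ is positive \emph{definite}; and for an unbiased MIC, Theorem~\ref{unbiased} gives $\lambda_{\rm max}(G)=1/d$, attained on the uniform vector. Ordering the eigenvalues as $1/d=\lambda_1\geq\lambda_2\geq\cdots\geq\lambda_{d^2}>0$, the fact that every $\lambda_i\leq 1/d$ forces $\tfrac1d G^{-1}\geq I$, so $I-\tfrac1d G^{-1}$ is negative semidefinite with eigenvalues $0$ (on the uniform vector) and $1-\tfrac{1}{d\lambda_i}\leq 0$ for $i\geq 2$; its singular values are thus $0$ together with $\sigma_i:=\tfrac{1}{d\lambda_i}-1\geq 0$. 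Since $\lambda(G_{\rm SIC})=(1/d,\tfrac{1}{d(d+1)},\ldots,\tfrac{1}{d(d+1)})$, the singular values of $I-\tfrac1d G_{\rm SIC}^{-1}$ are $0$ and $d$ with multiplicity $d^2-1$. Because a unitarily invariant norm depends only on the singular values and is nondecreasing under weak majorization, it suffices to show that the decreasingly ordered list $(\sigma_{d^2},\sigma_{d^2-1},\ldots,\sigma_2,0)$ weakly majorizes $(d,\ldots,d,0)$.

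For the majorization, fix $1\leq k\leq d^2-1$. The $k$ largest $\sigma_i$ are indexed by the set $S_k\subseteq\{2,\ldots,d^2\}$ of the $k$ smallest eigenvalues of $G$, so what is needed is $\sum_{i\in S_k}\bigl(\tfrac{1}{d\lambda_i}-1\bigr)\geq kd$, equivalently $\sum_{i\in S_k}\tfrac{1}{\lambda_i}\geq kd(d+1)$. Two elementary estimates combine to give this: the AM--HM inequality yields $\sum_{i\in S_k}\tfrac{1}{\lambda_i}\geq \dfrac{k^2}{\sum_{i\in S_k}\lambda_i}$, and the $k$ smallest of the $d^2-1$ numbers $\lambda_2,\ldots,\lambda_{d^2}$ have sum at most $\tfrac{k}{d^2-1}\sum_{i=2}^{d^2}\lambda_i=\tfrac{k}{d^2-1}\bigl(\tr G-\tfrac1d\bigr)\leq \tfrac{k}{d^2-1}\cdot\tfrac{d-1}{d}=\tfrac{k}{d(d+1)}$, using that $\tr G=\sum_i\tr E_i^2\leq 1$ for an unbiased MIC (the bound established in the proof of Theorem~\ref{closetoortho}). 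Multiplying the two estimates gives $\sum_{i\in S_k}\tfrac{1}{\lambda_i}\geq kd(d+1)$; the $k=d^2$ case of weak majorization is just the $k=d^2-1$ bound. Hence $\norm{I-\tfrac1d G^{-1}}\geq\norm{I-\tfrac1d G_{\rm SIC}^{-1}}$.

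It remains to pin down the equality condition. One direction is immediate: if the MIC is a SIC then $G=G_{\rm SIC}$. For the converse, note that since the MIC is unbiased the uniform vector is an eigenvector of $G$ with eigenvalue $1/d$, so if in addition $\lambda_2=\cdots=\lambda_{d^2}=\tfrac{1}{d(d+1)}$ then $G=\tfrac{1}{d(d+1)}I+\tfrac{1}{d^2(d+1)}J=G_{\rm SIC}$, where $J$ is the all-ones matrix; thus a non-SIC unbiased MIC either fails to be rank-$1$, in which case $\tr G<1$ and the bound $\sum_{i\in S_k}\lambda_i<\tfrac{k}{d(d+1)}$ is strict for every $k$, or is rank-$1$ with $\lambda_2,\ldots,\lambda_{d^2}$ not all equal, in which case the ``$k$ smallest'' estimate is strict for every $k$ (it is tight only when all $d^2-1$ of those eigenvalues coincide). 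In either case every one of the $d^2$ partial sums of $(\sigma_{d^2},\ldots,\sigma_2,0)$ strictly exceeds the corresponding partial sum of $(d,\ldots,d,0)$. A short scaling argument then finishes: for $\delta>0$ small enough the list $(1-\delta)(\sigma_{d^2},\ldots,\sigma_2,0)$ still weakly majorizes $(d,\ldots,d,0)$, so $(1-\delta)\norm{I-\tfrac1d G^{-1}}\geq\norm{I-\tfrac1d G_{\rm SIC}^{-1}}$, giving a strict inequality.

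The step I expect to be the main obstacle is precisely this last one: getting strictness for an \emph{arbitrary} unitarily invariant norm rather than, say, only the Frobenius or operator norm. A general unitarily invariant norm is only weakly, not strictly, monotone (the operator norm, for instance, sees only $\sigma_{d^2}$), so one cannot argue coordinatewise; the ``all $d^2$ partial sums strict'' observation together with the scaling trick (equivalently, an Abel-summation argument with the dual gauge function) is what makes the strictness uniform in the norm, and the elementary ``sum of the $k$ smallest'' inequality also needs its equality case checked carefully. As a sanity check on the non-strict inequality, one can instead derive it directly from Theorem~1 of~\cite{DeBrota:2018}: choosing the post-measurement states in~\eqref{ltpanalog} to be the MIC's own states $\rho_i=dE_i$ makes $\Phi=\tfrac1d G^{-1}$, and in the SIC case $\Phi_{\rm SIC}=\tfrac1d G_{\rm SIC}^{-1}=(d+1)I-\tfrac1d J$, so $\norm{I-\tfrac1d G^{-1}}=\norm{I-\Phi}\geq\norm{I-\Phi_{\rm SIC}}$; the self-contained argument above is what additionally yields the ``if and only if''.
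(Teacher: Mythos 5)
Your proof is correct, but it takes a genuinely different --- and far more self-contained --- route than the paper's. The paper disposes of Theorem~\ref{U-norm} in a single line, citing Theorem 1 of~\cite{DeBrota:2018}; the reduction is exactly the one you describe as a ``sanity check'' at the end (take the post-measurement states to be $\sigma_j = dE_j$, so that $\Phi = \frac{1}{d}G^{-1}$, and note that equality in that theorem forces both the MIC and the post-measurement states, hence $\{E_i\}$ itself, to be proportional to a SIC). What you supply instead is an internal argument: pass to singular values, prove that the spectrum of $I-\frac{1}{d}G^{-1}$ weakly majorizes $(d,\ldots,d,0)$ using only two facts the paper already establishes ($\lambda_{\rm max}(G)=1/d$ from Theorem~\ref{unbiased}, and $\tr G\leq 1$ with equality iff the MIC is rank-1, as in the proof of Theorem~\ref{closetoortho}), invoke Fan dominance, and obtain strictness uniformly over all unitarily invariant norms via the scaling trick. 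This buys transparency --- it isolates exactly which structural features of unbiased MICs drive the bound and shows the theorem does not depend on the full machinery of~\cite{DeBrota:2018} --- at the cost of length. One small repair is needed in the equality analysis: in the rank-1 non-SIC case you assert that the ``sum of the $k$ smallest eigenvalues'' estimate is strict for every $k$, but at $k=d^2-1$ that estimate is trivially an equality; strictness of that last partial sum comes instead from the AM--HM factor, which is strict precisely because $\lambda_2,\ldots,\lambda_{d^2}$ are not all equal. The cleanest phrasing is that equality in the chained bound for a given $k$ forces equality in \emph{both} factors, and for $k=d^2-1$ this already pins down the SIC spectrum; with that adjustment the strictness argument goes through as you intend.
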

\begin{proof}
This is a special case of Theorem 1 in~\cite{DeBrota:2018}.
\end{proof}

As with the theorems we proved above about MICs in general, this mathematical result has physical meaning. Classically speaking, the ``ideal of the detached observer'' (as
Pauli phrased it~\cite{Fuchs:2017b}) is a measurement that reads off
the system's point in phase space, call it $\lambda_i$, without
disturbance. A state of maximal certainty is one where an agent is absolutely certain which $\lambda_i$ exists.  An agent having maximal certainty about each of a
pair of identically prepared systems implies that she expects to obtain
the same outcome for a reference measurement on each system. In other words, her ``collision
probability'' is unity:
\begin{equation}
\sum_i p(\lambda_i)^2 = 1.
\label{eq:cprob-condition}
\end{equation}
There is also a quantum condition on states of maximal certainty. As before, we can approach the question, ``What is the unavoidable residuum that separates quantum from classical?''\ by finding the form of this quantum condition that brings it as close as possible to the classical version.

\begin{lemma}
  Given a MIC $\{E_i\}$ with Gramian $G$, a quantum state is pure if
  and only if its probabilistic representation satisfies
  \begin{equation}
    \sum_{ij} p(E_i) p(E_j) [G^{-1}]_{ij} = 1.
  \end{equation}
\end{lemma}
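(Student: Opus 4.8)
The plan is to recognize the left-hand side as $\tr \rho^2$ and then invoke the elementary fact that a density operator is pure exactly when $\tr \rho^2 = 1$. So the work splits into two pieces: expressing $\sum_{ij} p(E_i)p(E_j)[G^{-1}]_{ij}$ as $\tr \rho^2$, and recalling why $\tr \rho^2 = 1$ characterizes purity.

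First I would use the dual-basis expansion from \eqref{dualdef}. Since $\{E_i\}$ spans $\mathcal{L}(\mathcal{H}_d)$, the state expands as $\rho = \sum_j (\tr \rho E_j)\widetilde{E}_j = \sum_j p(E_j)\widetilde{E}_j$, and therefore $\tr \rho^2 = \sum_{ij} p(E_i)p(E_j)\,\tr(\widetilde{E}_i\widetilde{E}_j)$. It then remains to show that the Gram matrix of the dual basis is $G^{-1}$. Writing each dual element back in the original basis, $\widetilde{E}_i = \sum_k [M]_{ik}E_k$, the defining relation $\tr E_j\widetilde{E}_i = \delta_{ij}$ reads $MG = I$, so $M = G^{-1}$ (the inverse exists because the MIC elements are linearly independent). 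Then $\tr(\widetilde{E}_i\widetilde{E}_j) = [MGM^{\rm T}]_{ij} = [G^{-1}GG^{-1}]_{ij} = [G^{-1}]_{ij}$, using that $G$, and hence $G^{-1}$, is real symmetric. Combining the two observations gives $\sum_{ij}p(E_i)p(E_j)[G^{-1}]_{ij} = \tr \rho^2$.

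Finally I would close with the standard purity criterion: for positive semidefinite $\rho$ with $\tr \rho = 1$, the eigenvalues obey $0 \le \lambda_k \le 1$, hence $\lambda_k^2 \le \lambda_k$ and $\tr \rho^2 = \sum_k \lambda_k^2 \le \sum_k \lambda_k = 1$, with equality if and only if every $\lambda_k \in \{0,1\}$, i.e.\ if and only if $\rho$ is a rank-1 projector. This is exactly the statement to be proved.

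There is essentially no serious obstacle here; the one step that needs a moment's care — and is really the content of the computation — is the identification $[\tr\widetilde{E}_i\widetilde{E}_j] = G^{-1}$, together with the observation that $\rho$ is recovered from the MIC probabilities as its expansion coefficients in the dual basis.
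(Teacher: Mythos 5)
Your proof is correct and follows essentially the same route as the paper's: expand $\rho$ in the dual basis with the MIC probabilities as coefficients, identify the left-hand side with $\tr\rho^2$ via the fact that the dual basis has Gramian $G^{-1}$, and invoke the standard purity criterion. You simply supply explicit verifications (of $[\tr\widetilde{E}_i\widetilde{E}_j]=G^{-1}$ and of the $\tr\rho^2=1$ characterization) that the paper states without proof.
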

\begin{proof}
    Let $\{E_i\}$ be a MIC. The expansion of any quantum state $\rho$ in the dual basis is
\begin{equation}
\rho = \sum_i (\tr E_i \rho) \widetilde{E}_i\;.
\end{equation}
By the Born Rule, the coefficients are probabilities:
\begin{equation}
    \rho = \sum_i p(E_i) \widetilde{E}_i\;.
\end{equation}
Now, recall that while $\tr \rho = 1$ holds for any quantum state, $\tr
\rho^2 = 1$ holds if and only if that operator is a pure state,
i.e., a rank-1 projector.  These operators are the extreme points of
quantum state space; all other quantum states are convex combinations
of them. In terms of the MIC's dual basis, the pure-state condition is
\begin{equation}
    \sum_{ij} p(E_i)p(E_j) \tr \widetilde{E}_i \widetilde{E}_j = 1\;,
\end{equation}
and so, because the Gramian of the dual basis is the inverse of the MIC Gram matrix,
\begin{equation}
\sum_{ij} p(E_i) p(E_j) [G^{-1}]_{ij} = 1\;,
\label{eq:QM-condition}
\end{equation}
as desired.
\end{proof}

Equation \eqref{eq:QM-condition} closely resembles the collision probability, \eqref{eq:cprob-condition}. If $G^{-1}$ were the identity, they would be identical. On the face of it, it looks as though we should see how close $G^{-1}$ can get to the identity. One minor wrinkle is that we should actually compare $G^{-1}$ with $dI$ instead of just with $I$, because an unbiased, orthogonal MIC (if one could exist) would have the Gram matrix $\frac{1}{d}I$. So, how close can we bring $G^{-1}$ to $dI$, by choosing an
appropriate unbiased MIC?  We know the answer to this from Theorem \ref{U-norm}: The best choice is a SIC.

\section{Computational Overview of MIC Gramians}
\label{sec:numerics}
In order to explore the realm of MICs more broadly, and to connect
them with other areas of mathematical interest, it is worthwhile to
generate MICs \emph{randomly} and study the typical properties which result. In this section we focus on the Gram matrix spectra of four MIC varieties whose constructions are described in section \ref{sec:constructions}. These types are: 
\begin{enumerate}
    \item Generic MICs: a MIC generated from an arbitrary positive semidefinite basis
    \item Generic Rank-1 MICs: a MIC generated from an arbitrary rank-1 positive semidefinite basis 
    \item WH MICs: a MIC obtained from the WH orbit of an arbitrary density matrix
    \item Rank-1 WH MICs: a MIC obtained from the WH orbit of an arbitrary pure state density matrix.
\end{enumerate}
In Hilbert space dimensions $2$ through $5$ we generated $10^5$ MICs with the following methodologies. We constructed the generic MICs as in section \ref{sec:micfrombases} and the WH MICs as in section \ref{sec:groupcovariant}. Each generic MIC was obtained from a basis of positive semidefinite operators and each WH MIC was obtained from the orbit of an initial density matrix. In the generic rank-1 case, the pure states defining the basis of projectors were sampled uniformly from the Haar
measure. Likewise, in the rank-1 WH case, the initial vector was also sampled uniformly from the Haar measure. The positive semidefinite bases for the arbitrary-rank generic MICs and the initial states for the arbitrary-rank WH MICs were constructed as follows. First, Hermitian matrices $M$ were sampled from the Gaussian Unitary distribution, and, for each of these, the positive semidefinite matrix $M^\dag M$ was formed. $d^2$ of these sufficed to form a positive semidefinite
basis without loss of generality and a trace-normalized instance served as the initial state for the WH MICs. For each MIC, we constructed its Gram matrix and computed the eigenvalues. Figures \ref{fig:d2}, \ref{fig:d3}, \ref{fig:d4}, and
\ref{fig:d5} are histograms of the eigenvalue distributions for dimensions $2$, $3$, $4$, and $5$, respectively. 

We note some expected and unexpected features of these distributions. In accordance with Theorem \ref{unbiased}, both group covariant types, being unbiased, always have the maximal eigenvalue $1/d$, while this is the lower bound for the maximal eigenvalue for the other two types. Particularly in the unbiased cases, because the eigenvalues must sum to $1$, not all of them can be too large, so it is perhaps not surprising that there are few eigenvalues approaching $1/d$ and that all families show exponential
decay until that value. However, the spectra of rank-1 MICs, especially in dimensions 2 and 3 (Figures \ref{fig:d2} and \ref{fig:d3}), display a richness of features for which we have no explanation. 

Most surprising of all is the small eigenvalue plateau in Figure \ref{fig:d3} for the $d=3$ rank-1 WH MICs. Further scrutiny has revealed that the plateau ends precisely at $1/12$, the average value for the non-maximal eigenvalues of an unbiased $d=3$ MIC Gram matrix. The Gram matrix for a $d=3$ SIC has the
spectrum
\begin{equation}
    \left(\frac{1}{3},\frac{1}{12},\frac{1}{12},\frac{1}{12},\frac{1}{12},\frac{1}{12},\frac{1}{12},\frac{1}{12},\frac{1}{12}\right)\;,
\end{equation}
which has the maximal amount of degeneracy allowed. Dimension $3$ is also exceptional in the study of SICs: It is the only known dimension for which there is a continuous family of unitarily inequivalent SICs~\cite{Tabia:2013, Hughston:2016}. Because the Gram matrix spectra for $d=3$ rank-1 WH MICs also behaves unlike the other dimensions we have checked and because the plateau appears to be connected with the value $1/12$, we conjecture that the eigenvalue plateau and the continuous family of SICs
may be related.
\begin{conjecture}
The plateau in the eigenvalue distribution for $d = 3$, seen in
Figure~\ref{fig:d3}, is related to the existence of a
continuous family of unitarily inequivalent SICs in that
dimension.
\end{conjecture}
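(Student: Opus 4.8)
The statement is a conjecture about an empirical feature, so what follows is a plan of attack rather than a proof. The engine is a structural fact that holds for \emph{every} rank-$1$ WH MIC $E_{k,l}=\tfrac1d D_{k,l}|\psi\rangle\langle\psi|D_{k,l}^\dag$: since $D_{k,l}^\dag D_{m,n}$ equals a phase times $D_{m-k,n-l}$,
\begin{equation}
  [G]_{(k,l),(m,n)}=\tfrac{1}{d^2}\,c_{m-k,n-l}\;,\qquad c_{a,b}:=|\langle\psi|D_{a,b}|\psi\rangle|^2\;,
\end{equation}
so $G$ is a $\mathbb{Z}_d\times\mathbb{Z}_d$ group circulant. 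Its eigenvalues are thus the finite Fourier transform $\tfrac{1}{d^2}\widehat c(r,s)$, $(r,s)\in\mathbb{Z}_d^2$: the component $(0,0)$ gives $\widehat c(0,0)=\sum_{a,b}c_{a,b}=d$, i.e.\ the top eigenvalue $1/d$ forced by Theorem~\ref{unbiased}, and the other $d^2-1$ eigenvalues are $\tfrac{1}{d^2}\widehat c(r,s)$ over $(r,s)\neq(0,0)$. A SIC is the case $c_{a,b}\equiv\tfrac{1}{d+1}$, which makes $\widehat c(r,s)=\tfrac{d}{d+1}$ off the origin and reproduces the maximally degenerate spectrum $\bigl(\tfrac1d,(\tfrac{1}{d(d+1)})^{d^2-1}\bigr)$.

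Next I would specialize to $d=3$, where $\mathbb{Z}_3^\times=\{\pm1\}$ forces $c_{a,b}$ to be constant --- call the value $c_\ell$ --- on each of the four lines $\ell$ through the origin of $\mathbb{Z}_3^2$, because $D_{a,b}^{\pm1}$ differs from $D_{\pm a,\pm b}$ only by a phase. A short Parseval computation in the mutually unbiased basis that simultaneously diagonalizes the commuting operators on $\ell$ yields $c_\ell=\tfrac12(3\kappa_\ell-1)$, where $\kappa_\ell=\sum_m|\langle e_m^\ell|\psi\rangle|^4$ is the collision probability of the Born distribution in that MUB; the two-design identity for the four MUBs pins $\sum_\ell\kappa_\ell=2$. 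Evaluating $\widehat c$ then shows the eight non-maximal eigenvalues are, in four pairs, $\mu_\ell=\tfrac{c_\ell}{3}=\tfrac16(3\kappa_\ell-1)=\tfrac12 r_\ell^2$, with $r_\ell^2:=\kappa_\ell-\tfrac13$ the squared Euclidean distance of the MUB-$\ell$ probability vector from the center of the probability simplex. Now invoke the elementary geometric fact: for Haar-random $|\psi\rangle$, the probability vector in any fixed basis is uniform on the equilateral triangle $\Delta_2$ (circumradius $\sqrt{2/3}$, inradius $\sqrt{1/6}$), so the squared radius $r^2$ of a uniform point has \emph{constant} density on $[0,\tfrac16]$ --- the disc of radius equal to the inradius sits inside the triangle --- and decreasing density on $[\tfrac16,\tfrac23]$. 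Since the eight non-maximal eigenvalues of a sample are the four values $\tfrac12 r_\ell^2$ each doubled, and each $r_\ell$ has this same marginal law, the law of large numbers makes the pooled empirical eigenvalue histogram converge to a density that is flat on $[0,\tfrac1{12}]$ and decreasing on $[\tfrac1{12},\tfrac13]$: this is precisely the plateau of Figure~\ref{fig:d3}, and its right edge is \emph{provably} $\tfrac1{12}$.

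It remains to connect this edge to SICs. A WH SIC fiducial is exactly a $|\psi\rangle$ with $c_\ell=\tfrac14$, i.e.\ $\kappa_\ell=\tfrac12$, i.e.\ $r_\ell^2=\tfrac16$ for all four $\ell$ --- the MUB probability vectors all lie \emph{on the incircle} of $\Delta_2$. Hence the SIC locus is exactly the fiber $\{\,|\psi\rangle:\mu_\ell=\tfrac1{12}\ \forall\ell\,\}$ at the right edge of the plateau, and because the four conditions $\kappa_\ell=\tfrac12$ are constrained by $\sum_\ell\kappa_\ell=2$, only three are independent, so the locus is a curve in the four-real-dimensional $\mathbb{CP}^2$ of fiducials --- the known continuous family of unitarily inequivalent SICs~\cite{Tabia:2013,Hughston:2016}. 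One should check the coincidence is special to $d=3$: for $d\neq3$, $\mathbb{Z}_d^\times\neq\{\pm1\}$ so $c$ is not constant on lines and the eigenvalues are not affine functions of a few collision probabilities; the SIC value $\tfrac{d-1}{d(d+1)}$ equals the simplex inradius-squared $\tfrac{1}{d(d-1)}$ only when $(d-1)^2=d+1$, hence only for $d=3$; and only in $d=3$ does the count of independent SIC conditions fall below $\dim\mathbb{CP}^{d-1}$ rather than exceeding it, so elsewhere the SIC locus is zero-dimensional. The deliverable is therefore that the plateau with edge $1/12$ and the one-dimensionality of the SIC locus are two faces of the exceptionality of dimension three, with the continuous SIC family realized concretely as the preimage of the plateau's edge.

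The main obstacle is not a single hard estimate but three things. First, discharging the chain of dimension-three identities cleanly: the line-collapse, the MUB--Parseval formula $c_\ell=\tfrac12(3\kappa_\ell-1)$, the Fourier evaluation giving $\mu_\ell=\tfrac12 r_\ell^2$ with the correct pairing into doubled eigenvalues, and the two-design relation $\sum_\ell\kappa_\ell=2$. Second, being precise about the limit: stating and proving the explicit limiting density --- constant on $[0,\tfrac1{12}]$, then a computable decreasing arc on $[\tfrac1{12},\tfrac13]$ --- rather than a loose ``plateau,'' and noting that although the four $\kappa_\ell$ are correlated this does not affect the pooled marginal. Third, the partly interpretive last step: ``related to the continuous family'' is being assigned the concrete meaning ``the SIC locus is exactly the plateau's edge-fiber, and both are governed by the same feature of $d=3$,'' and one should say so openly; the one place an external input is unavoidable is in asserting that this edge-fiber is \emph{precisely} the known continuous family, with no stray isolated SIC components of a different flavor.
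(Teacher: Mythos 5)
This statement is a \emph{conjecture} in the paper: the authors offer no proof, only the numerical observation that the plateau in Figure~\ref{fig:d3} terminates at $1/12$ and the remark that $d=3$ is the one dimension with a continuous family of inequivalent SICs. So there is no argument of theirs to compare against; what you have written is a program that, if carried out, would substantially \emph{resolve} the conjecture rather than reproduce anything in the text. I have checked your key identities and they are correct. The Gram matrix of a rank-1 WH MIC is indeed a $\mathbb{Z}_d\times\mathbb{Z}_d$ circulant with symbol $c_{a,b}=|\langle\psi|D_{a,b}|\psi\rangle|^2$, so its spectrum is $\frac{1}{d^2}\widehat{c}$, with $\widehat{c}(0,0)=d$ recovering the $1/d$ eigenvalue of Theorem~\ref{unbiased}. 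In $d=3$ the relation $c_{-v}=c_v$ does collapse $c$ to one value per line; Parseval in the eigenbasis of the line's displacement operators gives $2c_\ell=3\kappa_\ell-1$; summing $3\kappa_\ell-1$ over the four lines gives $d-1=2$, hence $\sum_\ell\kappa_\ell=2$; and since each nontrivial character of $\mathbb{Z}_3^2$ has a line as its kernel (each line serving two characters), the eight non-maximal eigenvalues are the four values $\mu_\ell=c_\ell/3=\tfrac12(\kappa_\ell-\tfrac13)$, each with multiplicity two. These sum to $2/3$, consistent with $\tr G=1$. The identification $\kappa_\ell-\tfrac13=\|p^\ell-u\|^2$, the uniformity of the Born vector on $\Delta_2$ for Haar-random $\psi$, and the inradius-squared $1/6$ of $\Delta_2$ then give a limiting pooled density that is exactly flat on $[0,1/12]$ and decreasing beyond, and the WH SIC fiducials are precisely the fiber where all four $\mu_\ell$ sit at the edge $1/12$ --- three independent conditions on the four-real-dimensional fiducial space, matching the known one-parameter family.

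Two caveats. First, your program proves a precise statement (the plateau edge is $1/12$ and its fiber is the WH SIC locus) and you are right to say explicitly that this is the meaning you are assigning to the conjecture's informal ``is related to''; whether that fiber is nonempty and equals the known continuous family is an external input from the SIC literature, as you note. Second, the dimension-counting coda is the weakest part: the comparison of ``the SIC value $\frac{d-1}{d(d+1)}$'' with the simplex inradius-squared is not well posed for $d\neq 3$, since the collision probability of a SIC fiducial in a fixed MUB is not determined in general and the whole line-collapse mechanism is unavailable there; that paragraph should be presented as heuristic motivation for the exceptionality of $d=3$, not as part of the proof. With those qualifications, the plan is sound and considerably stronger than what the paper establishes.
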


\begin{figure}[b]
    \centering
    \includegraphics[width=\linewidth]{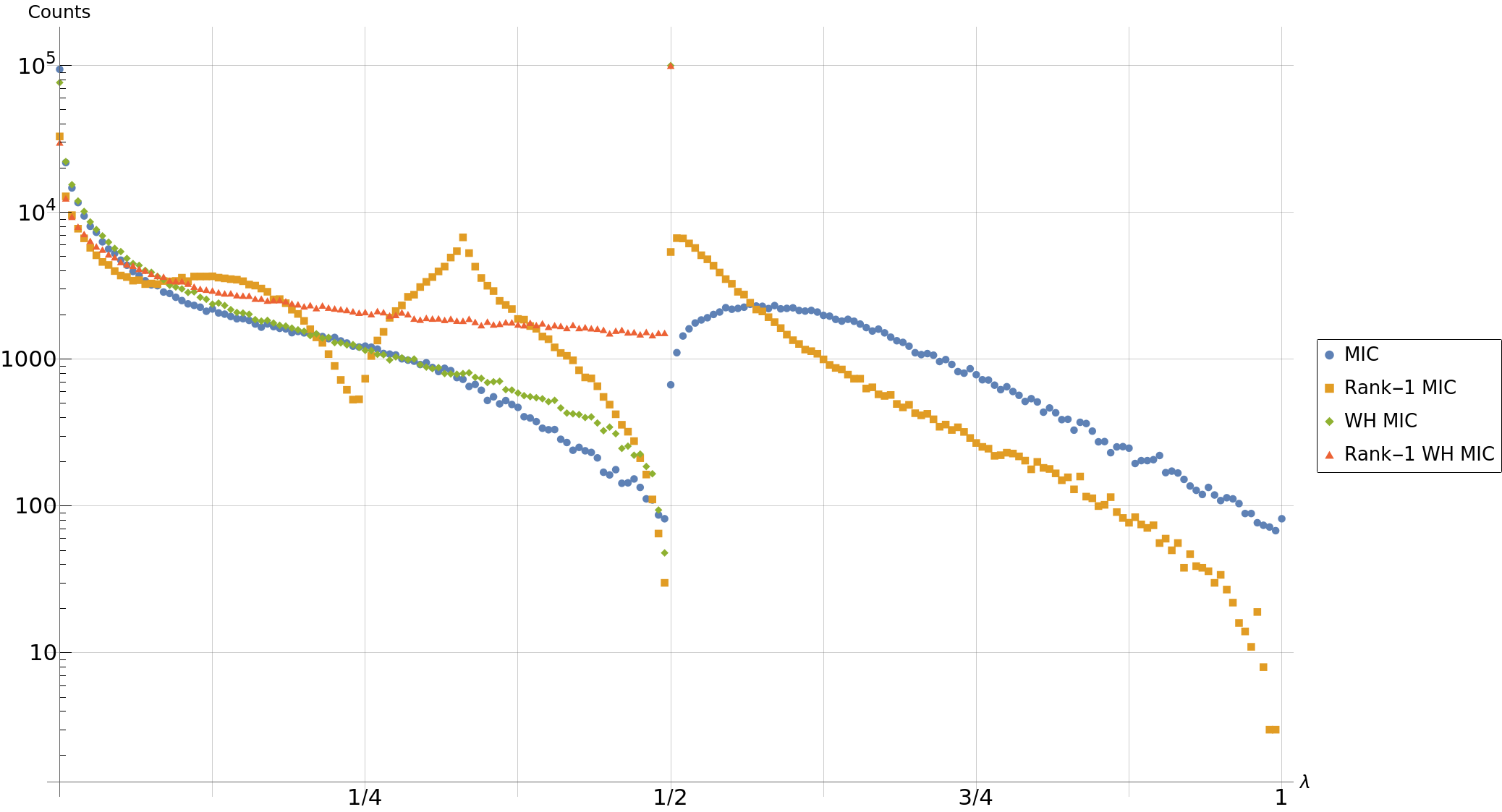}
    \caption{$d=2$ random MIC Gram matrix spectra, $N=10^5$, bin size $1/200$.}
    \label{fig:d2}
\end{figure}

\begin{figure}[b]
    \centering
    \includegraphics[width=\linewidth]{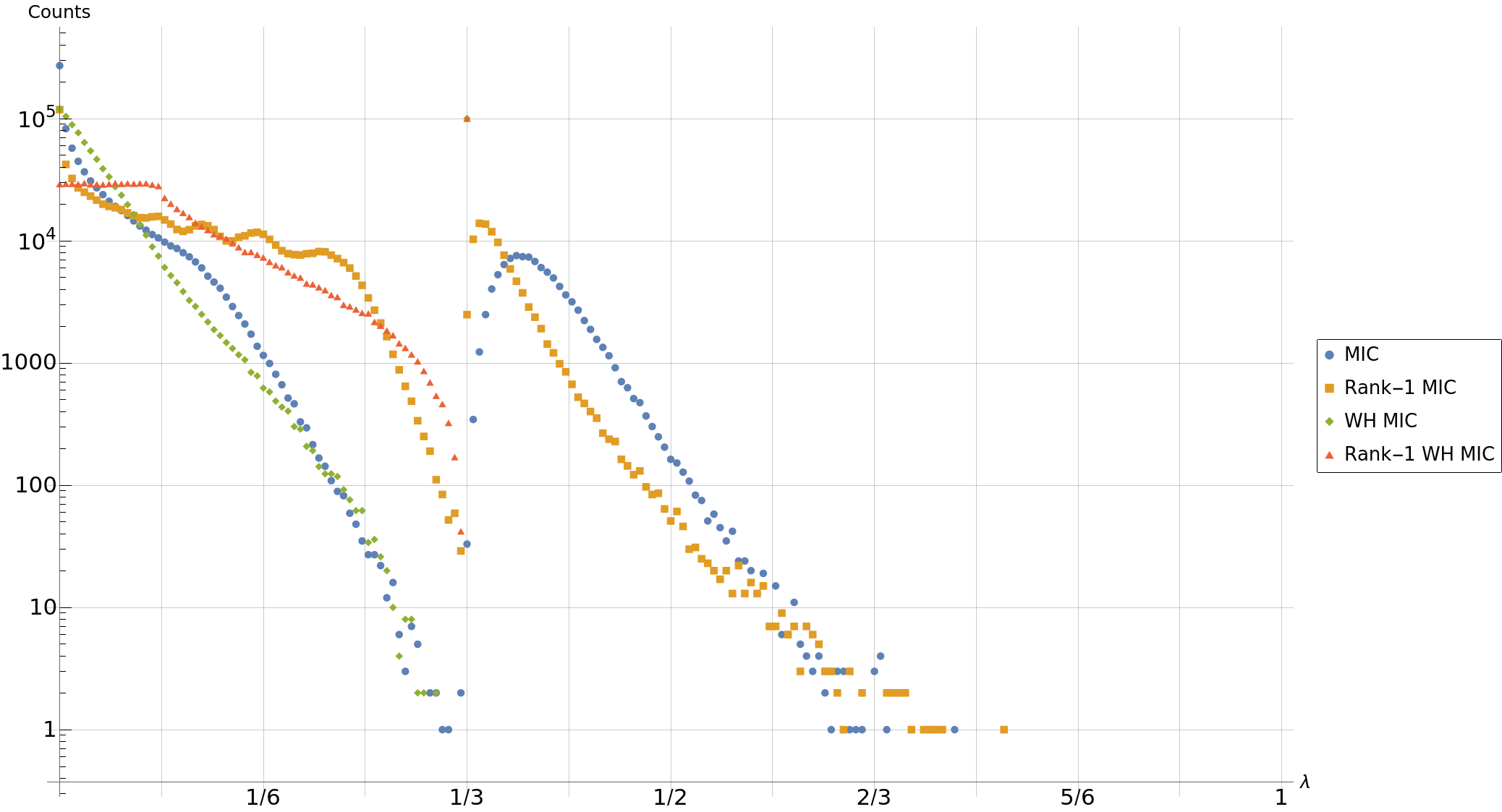}
    \caption{$d=3$ random MIC Gram matrix spectra, $N=10^5$, bin size $1/198$.}
    \label{fig:d3}
\end{figure}

\begin{figure}[b]
    \centering
    \includegraphics[width=\linewidth]{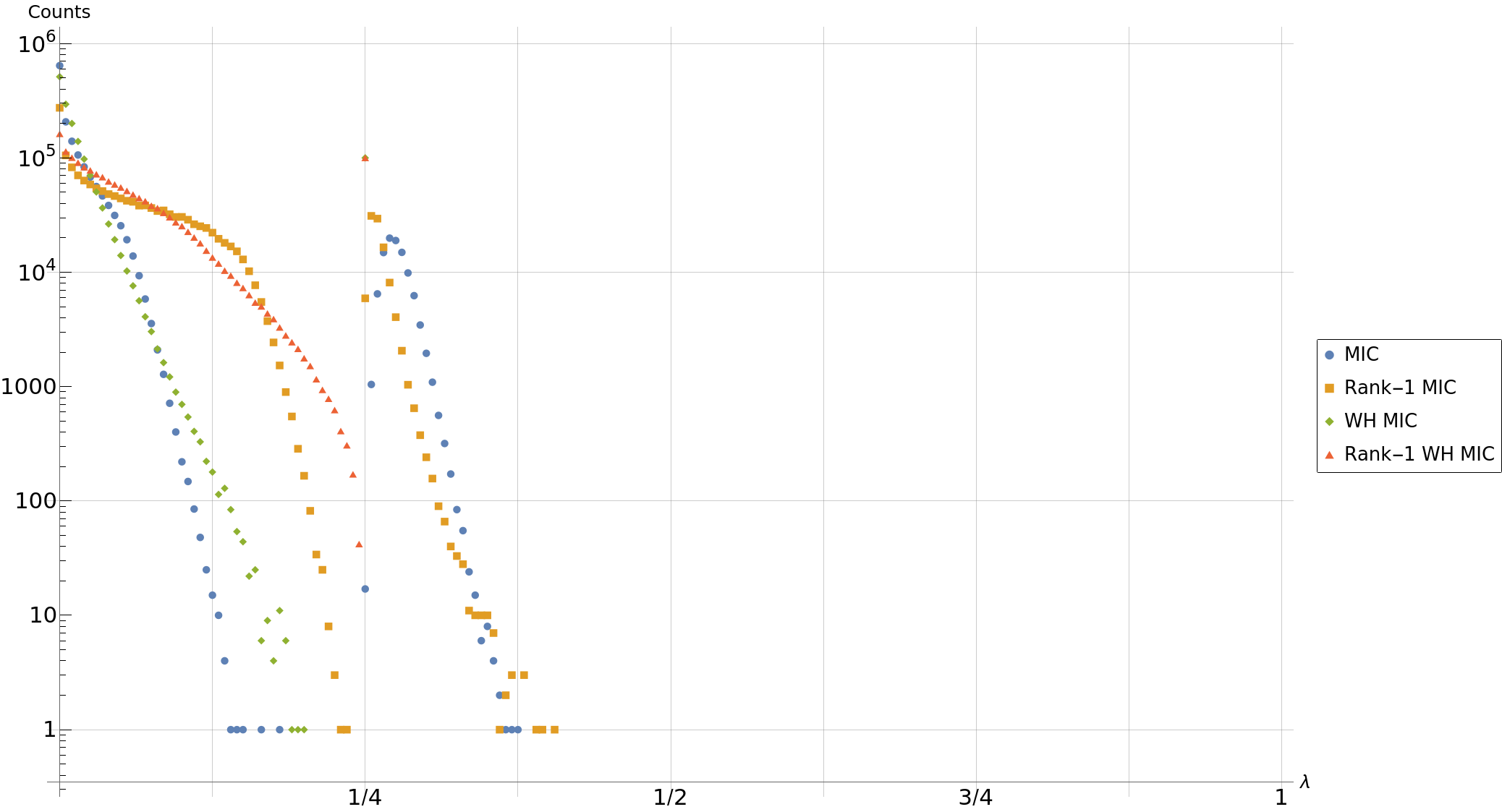}
    \caption{$d=4$ random MIC Gram matrix spectra, $N=10^5$, bin size $1/200$.}
    \label{fig:d4}
\end{figure}

\begin{figure}[b]
    \centering
    \includegraphics[width=\linewidth]{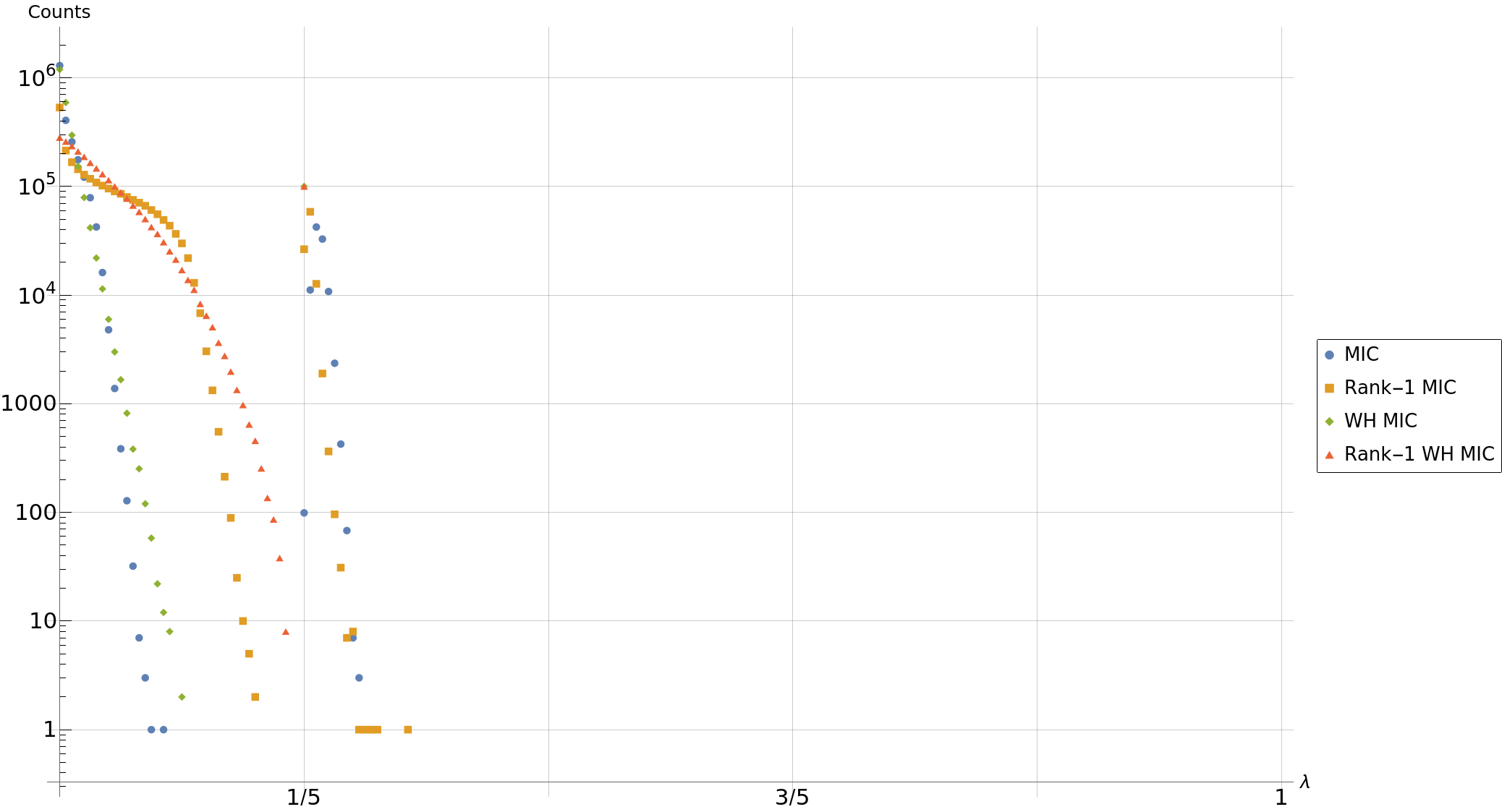}
    \caption{$d=5$ random MIC Gram matrix spectra, $N=10^5$, bin size $1/200$.}
    \label{fig:d5}
\end{figure}
%
%
%
%
%
%

\section{Conclusions}

We have argued that informational completeness provides the right
perspective from which to compare the quantum and the classical. The
structure of Minimal Informationally Complete quantum measurements and
especially how and to what degree this structure requires the
abandonment of classical intuitions therefore deserves explicit
study. We have surveyed the domain of MICs and derived some initial
results regarding their departure from such classical intuitions as
orthogonality, repeatability, and the possibility of
certainty. Central to understanding MICs are their Gram matrices; it
is through properties of these matrices that we were able to derive
many of our results. We have only just scratched the surface of this
topic, as our conjectures and unexplained numerical features of Gram
matrix spectra can attest.  In a sequel, we will explore another
application of Gram matrices. They hold a central role in the
construction of \emph{Wigner functions} from MICs~\cite{Stacey:2016c,
  Zhu:2016a, DeBrota:2017}, and Wigner functions are a topic pertinent
to quantum computation~\cite{Wootters:1987, Gibbons:2004, Veitch:2012,
  Veitch:2014, Howard:2016}.

Many properties of MIC Gram matrices remain unknown. Numerical
investigations have, in some cases, outstripped the proving of
theorems, resulting in the conjectures we have enumerated. Another
avenue for potential future exploration is the application of Shannon
theory to MICs. Importing the notions of information theory into
quantum mechanics has proved quite useful over the years at
illuminating strange or surprising features of the
physics~\cite{Braunstein:1990, Fuchs:1998, Buck:2000}. One promising
avenue of inquiry is studying the probabilistic representations of
quantum states using entropic measures. In the case of SICs, this has
already yielded intriguing connections among information theory, group
theory and geometry~\cite{Stacey:2016c, Stacey:2016b,
  Slomczynski:2014, Szymusiak:2014, Szymusiak:2015}. The analogous
questions for other classes of MICs remain open for investigation.

\section*{Acknowledgments}
We thank Marcus Appleby, Lane Hughston, Peter
Johnson, Steven van Enk and Huangjun Zhu for discussions. This research was supported in part by the John E.\ Fetzer Memorial
Trust, the John Templeton Foundation, and grants FQXi-RFP-1612 and
FQXi-RFP-1811B of the Foundational Questions Institute and Fetzer
Franklin Fund, a donor advised fund of Silicon Valley Community
Foundation.  The opinions expressed in this publication are those of
the authors and do not necessarily reflect the views of the John
Templeton Foundation.

\bibliographystyle{utphys}

\providecommand{\href}[2]{#2}\begingroup\raggedright\endgroup

\end{document}